\documentclass[a4paper,UKenglish,svgnames]{lipics-v2019}

\newcommand{\longversion}[1]{#1}
\newcommand{\shortversion}[1]{}

\usepackage{amssymb,amsmath,amsthm}
\usepackage[usenames,svgnames,dvipsnames]{xcolor}
\usepackage[poorman]{cleveref}
\usepackage{makecell}
\usepackage{multirow}
\usepackage{mathtools}
\usepackage{parskip}
\makeatletter
\def\thm@space@setup{%
  \thm@preskip=\parskip \thm@postskip=0pt
}
\makeatother
\hypersetup{
  colorlinks = true,
  linkbordercolor = {white},
  linkcolor = {IndianRed},
  citecolor = {DarkSeaGreen},
  urlcolor = {MediumVioletRed}
}

\newenvironment{proof_sketch}{\noindent{\em\sc Proof sketch:}}{ \hfill $\square$\\ }

\usepackage{enumitem}


\longversion{
\usepackage{charter}
\usepackage{amsthm}
\usepackage{eulervm}
}


\sloppy

\shortversion{
\setlength{\textfloatsep}{0.5cm}
\setlist[itemize]{leftmargin=*}
\setlength{\textfloatsep}{1pt}
\setlength{\intextsep}{1pt}
\setlength{\floatsep}{1pt}

\usepackage{titlesec}
}

\usepackage{url}
\usepackage[font=small,skip=0pt]{caption}

\newcommand{\YES}{{\sc Yes}\xspace}
\newcommand{\NO}{{\sc No}\xspace}
\newcommand{\OA}{{\sc Optimal Attack}\xspace}
\newcommand{\OD}{{\sc Optimal Defense}\xspace}

\newcommand{\SC}{{\sc Set Cover}\xspace}
\newcommand{\HS}{{\sc Hitting Set}\xspace}
\newcommand{\CL}{{\sc Clique}\xspace}
\newcommand{\kSum}{$k$-{\sc Sum}\xspace}
\newcommand{\gone}{{\sc greedy 1}\xspace}
\newcommand{\gtwo}{{\sc greedy 2}\xspace}

\newcommand{\caveat}{\ensuremath{\mathsf{CoNP\subseteq NP/Poly}}\xspace}

\usepackage{xspace}
\newcommand{\NP}{\ensuremath{\mathsf{NP}}\xspace}
\newcommand{\NPC}{\ensuremath{\mathsf{NP}}-complete\xspace}
\newcommand{\NPH}{\ensuremath{\mathsf{NP}}-hard\xspace}
\newcommand{\coNPH}{\ensuremath{\mathsf{coNP}}-hard\xspace}
\newcommand{\WTC}{\ensuremath{\mathsf{W[2]}}-complete\xspace}
\newcommand{\WT}{\ensuremath{\mathsf{W[2]}}\xspace}
\newcommand{\WO}{\ensuremath{\mathsf{W[1]}}\xspace}
\newcommand{\WTH}{\ensuremath{\mathsf{W[2]}}-hard\xspace}
\newcommand{\WOH}{\ensuremath{\mathsf{W[1]}}-hard\xspace}
\newcommand{\FPT}{\ensuremath{\mathsf{FPT}}\xspace}
\newcommand{\el}{\ensuremath{\ell}\xspace}
\newcommand{\suc}{\ensuremath{\succ}\xspace}


\newcommand{\CC}{\ensuremath{\mathcal C}\xspace}
\newcommand{\DD}{\ensuremath{\mathcal D}\xspace}
\newcommand{\EE}{\ensuremath{\mathcal E}\xspace}

\newcommand{\GG}{\ensuremath{\mathcal G}\xspace}
\newcommand{\HH}{\ensuremath{\mathcal H}\xspace}
\newcommand{\II}{\ensuremath{\mathcal I}\xspace}

\newcommand{\LL}{\ensuremath{\mathcal L}\xspace}

\newcommand{\OO}{\ensuremath{\mathcal O}\xspace}
\newcommand{\PP}{\ensuremath{\mathcal P}\xspace}
\newcommand{\QQ}{\ensuremath{\mathcal Q}\xspace}

\renewcommand{\SS}{\ensuremath{\mathcal S}\xspace}

\newcommand{\UU}{\ensuremath{\mathcal U}\xspace}
\newcommand{\VV}{\ensuremath{\mathcal V}\xspace}
\newcommand{\WW}{\ensuremath{\mathcal W}\xspace}

\newcommand{\ZZ}{\ensuremath{\mathcal Z}\xspace}

\usepackage{nicefrac}

\newcommand{\nfrac}{\nicefrac}

\newtheorem{observation}{\bf Observation}

\newcommand{\eps}{\varepsilon}
\renewcommand{\epsilon}{\eps}

\newcommand{\ignore}[1]{}

\newcommand{\pr}{\ensuremath{\prime}\xspace}

\renewcommand{\leq}{\leqslant}

\renewcommand{\ge}{\geqslant}
\renewcommand{\le}{\leqslant}

\usepackage{cleveref}

\crefname{theorem}{Theorem}{Theorems}
\crefname{observation}{Observation}{Observations}
\crefname{lemma}{Lemma}{Lemmas}
\crefname{corollary}{Corollary}{Corollaries}
\crefname{proposition}{Proposition}{Propositions}
\crefname{definition}{Definition}{Definitions}
\crefname{claim}{Claim}{Claims}
\crefname{reductionrule}{Reduction rule}{Reduction rules}

\newcommand{\WONE}{\textrm{\textup{W[1]}}}
\newcommand{\WTWO}{\textrm{\textup{W[2]}}}
\newcommand{\XP}{\textrm{\textup{XP}}}
\newcommand{\WP}{\textrm{\textup{W[P]}}}

\title{A Parameterized Perspective on Protecting Elections}

\author{Palash Dey}{Indian Institute of Technology, Kharagpur, India}{palash.dey@cse.iitkgp.ac.in}{}{}
\author{Neeldhara Misra}{Indian Institute of Technology, Gandhinagar, India}{neeldhara.m@iitgn.ac.in}{}{}
\author{Swaprava Nath}{Indian Institute of Technology, Kanpur, India}{swaprava@iitk.ac.in}{}{}
\author{Garima Shakya}{Indian Institute of Technology, Kanpur, India}{garima@cse.iitk.ac.in}{}{}

\authorrunning{P. Dey, N. Misra, S. Nath, and G. Shakya }

\keywords{parameterized complexity, election control, optimal attack, optimal defense}

\hideLIPIcs{}
\begin{document}

\maketitle

\begin{abstract}
We study the parameterized complexity of the optimal defense and optimal attack problems in voting. In both the problems, the input is a set of voter groups (every voter group is a set of votes) and two integers $k_a$ and $k_d$ corresponding to respectively the number of voter groups the attacker can attack and the number of voter groups the defender can defend. A voter group gets removed from the election if it is attacked but not defended. In the optimal defense problem, we want to know if it is possible for the defender to commit to a strategy of defending at most $k_d$ voter groups such that, no matter which $k_a$ voter groups the attacker attacks, the outcome of the election does not change. In the optimal attack problem, we want to know if it is possible for the attacker to commit to a strategy of attacking $k_a$ voter groups such that, no matter which $k_d$ voter groups the defender defends, the outcome of the election is always different from the original (without any attack) one. 
We show that both the optimal defense problem and the optimal attack problem are computationally intractable for every scoring rule and the Condorcet voting rule even when we have only $3$ candidates. We also show that the optimal defense problem for every scoring rule and the Condorcet voting rule is \WTH for both the parameters $k_a$ and $k_d$, while it admits a fixed parameter tractable algorithm parameterized by the combined parameter $(k_a,k_d)$. The optimal attack problem for every scoring rule and the Condorcet voting rule turns out to be much harder -- it is \WOH even for the combined parameter $(k_a,k_d)$. We propose two greedy algorithms for the \OD problem and empirically show that they perform effectively on reasonable voting profiles.

\end{abstract}

\section{Introduction}

The problem of election control asks if it is possible for an external agent, usually with a fixed set of resources, to influence the outcome of the election by altering its structure in some limited way. There are several specific manifestations of this problem: for instance, one may ask if it is possible to change the winner by deleting $k$ voter groups, presumably by destroying ballot boxes or rigging electronically submitted votes. Indeed, several cases of violence at the ballot boxes have been placed on record~\cite{Bhattacharjya,RT}, and in 2010, Halderman and his students exposed serious vulnerabilities in the electronic voting systems that are in widespread use in several states~\cite{Halderman}. A substantial amount of the debates around the recently concluded presidential elections in the United States revolved around issues of potential fraud, with people voting multiple times, stuffing ballot boxes, etc. all of which are well recognized forms of election control. For example, \longversion{Wolchok et al.~}\cite{DBLP:conf/fc/WolchokWIH12} studied security aspects on Internet voting systems.

\begin{table*}[!htbp]
 \begin{center}
 \renewcommand{\arraystretch}{1}
 \begin{tabular}{|c|c|c|c|c|}\hline
  \multirow{2}{*}{Parameters} & \multicolumn{2}{c|}{\OD} & \multicolumn{2}{c|}{\OA}\\\cline{2-5}
  & Scoring rules & Condorcet & Scoring rules & Condorcet\\\hline\hline

  $k_d$ & \WTH [\longversion{\Cref{thm:hard_td}}\shortversion{\Cref{thm:hard_td_od_oa}}] & \WTH [\longversion{\Cref{thm:hard_td_condorcet}}\shortversion{\Cref{thm:hard_td_condorcet_od_oa}}] & \WTH [\longversion{\Cref{thm:hard_td_oa}}\shortversion{\Cref{thm:hard_td_od_oa}}] & \WTH [\longversion{\Cref{cor:hard_ta_condorcet}}\shortversion{\Cref{thm:hard_td_condorcet_od_oa}}] \\\hline

  $k_a$ & \WTH [\longversion{\Cref{thm:hard_ta}}\shortversion{\Cref{thm:hard_ta_sc_con}}] & \WTH [\longversion{\Cref{thm:hard_ta_condorcet}}\shortversion{\Cref{thm:hard_ta_sc_con}}] & \multirow{3}{*}{\WOH [\longversion{\Cref{thm:hard_tad_scoring}}\shortversion{\Cref{thm:hard_tad_scoring_con}}]} & \multirow{3}{*}{\WOH [\longversion{\Cref{thm:hard_tad_condorcet}}\shortversion{\Cref{thm:hard_tad_scoring_con}}]} \\\cline{1-3}

  $(k_a, k_d)$ & \multicolumn{2}{c|}{\makecell{$\OO^*(k_a^{k_d})$ [\Cref{thm:fpt_ka_m}]\\No poly kernel [\Cref{cor:no_poly_kernel}]}} & & \\\hline

  $m$ &\multicolumn{2}{c|}{para-\NPH[\Cref{cor:paranph}]}&\multicolumn{2}{c|}{para-\coNPH[\Cref{cor:paranph}]}\\\hline

 \end{tabular}
 \caption{Summary of parameterized complexity results. $k_d:$ the maximum number of voter groups that the defender can defend. $k_a:$ the maximum number of voter groups that the attacker can attack. $m:$ the number of candidates.}\label{tbl:summary}
 \end{center}
\end{table*}

The study of controlling elections is fundamental to computational social choice: it is widely studied from a theoretical perspective, and has deep practical impact. \shortversion{The pioneering work of }\longversion{Bartholdi et al~}\cite{Bartholdi1992} initiated the study of these problems from a computational perspective, hoping that computational hardness of these problems may suggest a substantial barrier to the phenomena of control: if it is, say \NPH to control an election, then the manipulative agent may not be able to compute an optimal control strategy in a reasonable amount of time. This basic approach has been intensely studied in various other scenarios. For instance, \longversion{Faliszewski et al.~}\cite{FaliszewskiHH11} studied the problem of control where different types of attacks are combined (multimode control), \longversion{Mattei et al~}\cite{MatteiNW14} showed hardness of a variant of control which just exercises different tie-breaking rules, \longversion{Bulteau et al.~}\cite{DBLP:journals/tcs/BulteauCFNT15} studied voter control in a combinatorial setting, etc\longversion{~\cite{DBLP:journals/amai/MiaskoF16,DBLP:journals/tcci/PutF16,DBLP:journals/jair/FaliszewskiHH15,DBLP:conf/aaai/ChenFNT15,DBLP:conf/ecai/MagieraF14,DBLP:journals/jair/FaliszewskiHHR09,DBLP:conf/aaim/FaliszewskiHHR08,DBLP:conf/aaai/FaliszewskiHHR07,DBLP:conf/cats/ErdelyiR10,DBLP:conf/ecai/MaushagenR16,DBLP:journals/mlq/ErdelyiNR09,DBLP:conf/aldt/ErdelyiHH15,DBLP:conf/aldt/ErdelyiHH15,DBLP:conf/ijcai/FitzsimmonsHH13,DBLP:conf/ecai/HemaspaandraHR12a,DBLP:journals/iandc/FaliszewskiHHR11,DBLP:journals/mlq/HemaspaandraHR09,DBLP:conf/tark/FaliszewskiHHR09,DBLP:journals/mst/Menton13,DBLP:conf/ijcai/MentonS13,DBLP:conf/aaai/ParkesX12,DBLP:journals/tcs/Dey18,DBLP:journals/tcs/DeyMN18,DBLP:journals/tcs/DeyMN17,DBLP:conf/mfcs/DeyM17,DBLP:conf/atal/DeyMN15a,DBLP:conf/atal/Dey19}}.

Exploring parameterized complexity of various control problems has also gained a lot of interest. For example, \longversion{Betzler and Uhlmann~}\cite{DBLP:journals/tcs/BetzlerU09} studied parameterized complexity of candidate control in elections and showed interesting connection with digraph problems, \longversion{Liu and Zhu~}\cite{DBLP:journals/ipl/LiuZ10,DBLP:journals/tcs/LiuZ13} studied parameterized complexity of control problem by deleting voters for many common voting rules\longversion{, and so on~\cite{DBLP:journals/tcs/LiuFZL09,DBLP:journals/tcs/WangSYGFSC15,DBLP:conf/atal/HemaspaandraLM13,DBLP:journals/tcs/DeyMN16,DBLP:journals/tcs/DeyMN19}}. Studying election control from a game theoretic approach using security games is also an active area of research. See, for example, the works of \longversion{An et al. and Letchford et al.~}\cite{DBLP:conf/atal/AnBVT13,DBLP:conf/sagt/LetchfordCM09}.


The broad theme of using computational hardness as a barrier to control has two distinct limitations: one is, of course, that some voting rules simply remain computationally vulnerable to many forms of control, in the sense that optimal strategies can be found in polynomial time. The other is that even \NPH control problems often admit reasonable heuristics, can be approximated well, or even admit efficient exact algorithms in realistic scenarios. Therefore, relying on \NP-hardness alone is arguably not a robust strategy against control. To address this issue, the work of \longversion{Yin et al.}~\cite{YinVAH16} explicitly defined the problem of \textit{protecting an election from control}, where in addition to the manipulative agent, we also have a ``defender'', who can also deploy some resources to spoil a planned attack. In this setting, elections are defined with respect to \textit{voter groups} rather than voters, which is a small difference from the traditional control setting. The voter groups model allows us to consider attacks on sets of voters, which is a more accurate model of realistic control scenarios.

In \longversion{Yin et al.~}\cite{YinVAH16}, the defense problem is modeled as a Stackelberg game in which limited protection resources (say $k_d$) are deployed to protect a collection of voter groups and the adversary responds by attempting to subvert the election (by attacking, say, at most $k_a$ groups). They consider the plurality voting rule, and show that the problem of choosing the minimal set of resources that guarantee that an election cannot be controlled is \NPH. They further suggest a Mixed-Integer Program formulation that can usually be efficiently tackled by solvers. Our main contribution is to study this problem in a parameterized setting and provide a refined complexity landscape for it. We also introduce the complementary attack problem, and extend the study to voting rules beyond plurality. We now turn to a summary of our contributions.

\paragraph*{\bf Contribution:} We refer the reader to Section~\ref{sec:prelims} for the relevant formal definitions, while focusing here on a high-level overview of our results. Recall that the \OD problem asks for a set of at most $k_d$ voter groups which, when protected, render any attack on at most $k_a$ voter groups unsuccessful. In this paper, we study the parameterized complexity of \OD{} for all scoring rules and the Condorcet voting rule (these are natural choices because they are computationally vulnerable to control - - the underlying ``attack problem'' can be resolved in polynomial time). We show that the problem of finding an optimal defense is tractable when both the attacker and the defender have limited resources. Specifically, we show that the problem is fixed-parameter tractable with the combined parameter $(k_a,k_d)$ by a natural bounded-depth search tree approach. We also show that the \OD problem is unlikely to admit a polynomial kernel under plausible complexity theoretic assumption. We observe that both these parameters are needed for fixed parameter tractability, as we show \WT-hardness when \OD{} is parameterized by either $k_a$ or $k_d$.

Another popular parameter considered for voting problems is $m$, the number of candidates --- as this is usually small compared to the size of the election in traditional application scenarios. Unfortunately, we show that \OD{} is \NPH even when the election has only $3$ candidates, eliminating the possibility of fixed-parameter algorithms (and even XP algorithms). This strengthens a hardness result shown in\longversion{ Yin et al.}~\cite{YinVAH16}. Our hardness results on a constant number of candidates rely on a succinct encoding of the information about the scores of the candidates from each voter group. We also observe that the problem is polynomially solvable when only two candidates are involved.

We introduce the complementary problem of attacking an election: here the attacker plays her strategy first, and the defender is free to defend any of the attacked groups within the budget. The attacker wins if she is successful in subverting the election no matter which defense is played out. This problem turns out to be harder: it is already \WOH{} when parameterized by \textit{both} $k_a$ and $k_d$, which is in sharp contrast to the \OD{} problem. This problem is also hard in the setting of a constant number of candidates --- specifically, it is \coNPH for the plurality voting rule [\Cref{cor:oa_conph}] and the Condorcet voting rule [\Cref{cor:oa_condorcet_conph}] even when we have only three candidates if every voter group is encoded as the number of plurality votes every candidate receives from that voter group. Our demonstration of the hardness of the attack problem is another step in the program of using computational intractability as a barrier to undesirable phenomenon, which, in this context, is the act of planning a systematic attack on voter groups with limited resources.

We finally propose two simple greedy algorithms for the \OD problem and empirically show that it may be able to solve many instances of practical interest.
%
%

\section{Preliminaries}
\label{sec:prelims}
Let $\CC = \{c_1, c_2, \ldots, c_m\}$ be a set of candidates and $\VV = \{v_1, v_2, \ldots, v_n\}$ a set of voters. If not mentioned otherwise, we denote the set of candidates by \CC, the set of voters by \VV, the number of candidates by $m$, and the number of voters by $n$. Every voter $v_i$ has a preference or vote $\suc_i$ which is a complete order over \CC. We denote the set of all complete orders over \CC by $\LL(\CC)$. We call a tuple of $n$ preferences $(\suc_1, \suc_2, \cdots, \suc_n)\in\LL(\CC)^n$ an $n$-voter preference profile. Often it is convenient to view a preference profile as a multi-set consisting of its votes. The view we are taking will be clear from the context. A voting rule (often called voting correspondence) is a function $r: \cup_{n\in\mathbb{N}} \LL(\CC)^n \longrightarrow 2^\CC\setminus\{\emptyset\}$ which selects, from a preference profile, a nonempty set of candidates as the winners. We refer the reader to~\cite{brandt2015handbook} for a comprehensive introduction to computational social choice. In this paper we will be focusing on two voting rules -- the scoring rules and the Condorcet voting rule which are defined as follows.

{\bf Scoring Rule:} A collection of $m$-dimensional vectors $\overrightarrow{s_m}=\left(\alpha_1,\alpha_2,\dots,\alpha_m\right)\in\mathbb{R}^m$  with $\alpha_1\ge\alpha_2\ge\dots\ge\alpha_m$ and $\alpha_1>\alpha_m$ for every $m\in \mathbb{N}$ naturally defines a voting rule --- a candidate gets score $\alpha_i$ from a vote if it is placed at the $i^{th}$ position, and the  score of a candidate is the sum of the scores it receives from all the votes. The winners are the candidates with the highest score. Given a set of candidates \CC, a score vector $\overrightarrow{\alpha}$ of length $|\CC|$, a candidate $x\in\CC$, and a profile \PP, we denote the score of $x$ in \PP by $s^{\overrightarrow{\alpha}}_\PP(x)$. When the score vector $\overrightarrow{\alpha}$ is clear from the context, we omit $\overrightarrow{\alpha}$ from the superscript. A straight forward observation is that the scoring rules remain unchanged if we multiply every $\alpha_i$ by any constant $\lambda>0$ and/or add any constant $\mu$. Hence, we assume without loss of generality that for any score vector $\overrightarrow{s_m}$, there exists a $j$ such that $\alpha_j - \alpha_{j+1}=1$ and $\alpha_k = 0$ for all $k>j$. We call such a score vector a {\em normalized score vector}.

{\bf Weighted Majority Graph and Condorcet Voting Rule:} Given an election $\EE = (\CC, (\suc_1, \suc_2, \ldots, \suc_n))$ and two candidates $x, y\in\CC$, let us define $N_\EE(x,y)$ to be the number of votes where the candidate $x$ is preferred over $y$. We say that a candidate $x$ defeats another candidate $y$ in {\em pairwise election} if $N_\EE(x,y) > N_\EE(y,x)$. Using the election \EE, we can construct a weighted directed graph $\GG_\EE = (\UU = \CC, E)$ as follows. The vertex set \UU of the graph $\GG_\EE$ is the set of candidates $\CC$. For any two candidates $x, y\in\CC$ with $x\ne y$, let us define the margin $\DD_\EE(x, y)$ of $x$ from $y$ to be $N_\EE(x,y) - N_\EE(y,x)$. We have an edge from $x$ to $y$ in $\GG_\EE$ if $\DD_\EE(x,y)>0$. Moreover, in that case, the weight $w(x,y)$ of the edge from $x$ to $y$ is $\DD_\EE(x,y)$. A candidate $c$ is called the {\em Condorcet winner} of an election \EE if there is an edge from $c$ to every other vertices in the weighted majority graph $\GG_\EE$. The Condorcet voting rule outputs the Condorcet winner if it exists and outputs the set \CC of all candidates otherwise.

Let $r$ be a voting rule. We study the $r$-\OD problem which was defined by \longversion{Yin et al.~}\cite{YinVAH16}. It is defined as follows. Intuitively, the $r$-\OD problem asks if there is a way to defend $k_d$ voter groups such that, irrespective of which $k_a$ voter groups the attacker attacks, the output of the election (that is the winning set of candidates) is always same as the original one. A voter group gets deleted if only if it is attacked but not defended.

\begin{definition}[$r$-\OD]
 Given $n$ voter groups $\GG_i, i\in[n],$ two integers $k_a$ and $k_d$, does there exist an index set $\II\subseteq [n]$ with $|\II|\le k_d$ such that, for every $\II^\pr\subset [n]\setminus\II$ with $|\II^\pr|\le k_a$, we have $r((\GG_i)_{i\in[n]\setminus\II^\pr}) = r((\GG_i)_{i\in[n]})$? The integers $k_a$ and $k_d$ are called respectively attacker's resource and defender's resource. We denote an arbitrary instance of the $r$-\OD problem by $(\CC, \{\GG_i: i\in[n]\}, k_a, k_d)$.
\end{definition}

We also study the $r$-\OA problem which is defined as follows. Intuitively, in the $r$-\OA problem the attacker is interested to know if it is possible to attack $k_a$ voter groups such that, no matter which $k_d$ voter groups the defender defends, the outcome of the election is never same as the original (that is the attack is successful).

\begin{definition}[$r$-\OA]
 Given $n$ voter groups $\GG_i, i\in[n],$ two integers $k_a$ and $k_d$, does there exist an index set $\II\subseteq [n]$ with $|\II|\le k_a$ such that, for every $\II^\pr\subseteq [n]$ with $|\II^\pr|\le k_d$, we have $r((\GG_i)_{i\in [n]\setminus (\II\setminus\II^\pr)}) \ne r((\GG_i)_{i\in[n]})$? We denote an arbitrary instance of the $r$-\OA problem by $(\CC, \{\GG_i: i\in[n]\}, k_a, k_d)$.
\end{definition}


{\bf Encoding of the Input Instance:} In both the $r$-\OD and $r$-\OA problems, we assume that every input voter group \GG is encoded as follows. The encoding lists all the different votes \suc that appear in the voter group \GG along with the number of times the vote \suc appear in \GG. Hence, if a voter group \GG contains only $k$ different votes over $m$ candidates and consists of $n$ voters, then the encoding of \GG takes $\OO(km\log m \log n)$ bits of memory.

{\bf Parameterized complexity:}
\longversion{
In parameterized
complexity, each problem instance comes
with a parameter $k$. Formally, a parameterized problem $\Pi$ is a
subset of $\Gamma^{*}\times
\mathbb{N}$, where $\Gamma$ is a finite alphabet. An instance of a
parameterized problem is a tuple $(x,k)$, where $k$ is the
parameter. A central notion is \emph{fixed parameter
tractability} (FPT) which means, for a
given instance $(x,k)$, solvability in time $f(k) \cdot p(|x|)$,
where $f$ is an arbitrary function of $k$ and
$p$ is a polynomial in the input size $|x|$.
Just as NP-hardness is used as evidence that a problem probably is not polynomial time solvable, there exists a hierarchy of complexity classes above FPT, and showing that a parameterized problem is hard for one of these classes is considered
evidence that the problem is unlikely to be fixed-parameter tractable. The main classes in this hierarchy are: $ \FPT  \subseteq \WONE \subseteq \WTWO \subseteq \cdots \subseteq \WP \subseteq \XP.$ We now define the notion of parameterized reduction~\cite{CyganEtAl}.}
\shortversion{A parameterized problem $\Pi$ is a
subset of $\Gamma^{*}\times
\mathbb{N}$, where $\Gamma$ is a finite alphabet. A central notion is \emph{fixed parameter
tractability} (FPT) which means, for a
given instance $(x,k)$, solvability in time $f(k) \cdot p(|x|)$,
where $f$ is an arbitrary function of $k$ and
$p$ is a polynomial in the input size $|x|$. There exists a hierarchy of complexity classes above FPT, and showing that a parameterized problem is hard for one of these classes is considered
evidence that the problem is unlikely to be fixed-parameter tractable. The main classes in this hierarchy are: $ \FPT  \subseteq \WONE \subseteq \WTWO \subseteq \cdots \subseteq \WP \subseteq \XP.$ We now define the notion of parameterized reduction~\cite{CyganEtAl}.}
\begin{definition}
Let $A,B$ be parameterized problems.  We say that $A$ is {\bf \em fpt-reducible} to $B$ if there exist functions
$f,g:\mathbb{N}\rightarrow \mathbb{N}$, a constant $\alpha \in \mathbb{N}$ and
 an algorithm $\Phi$ which transforms an instance $(x,k)$ of $A$ into an instance $(x',g(k))$ of $B$
in time $f(k) |x|^{\alpha}$
so that $(x,k) \in A$ if and only if $(x',g(k)) \in B$.
\end{definition}

To show W-hardness\longversion{ in the parameterized setting}, it is enough to give a parameterized reduction from a known hard problem.\longversion{ For a more detailed and formal introduction to parameterized complexity, we refer the reader to~\cite{CyganEtAl} for a detailed introduction to this paradigm.}

\longversion{

\begin{definition}{\rm \bf[Kernelization]}~\cite{niedermeier2006invitation,flum2006parameterized}
A kernelization algorithm for a parameterized problem   $\Pi\subseteq \Gamma^{*}\times \mathbb{N}$ is an
algorithm that, given $(x,k)\in \Gamma^{*}\times \mathbb{N} $, outputs, in time polynomial in $|x|+k$, a pair
$(x',k')\in \Gamma^{*}\times
  \mathbb{N}$ such that (a) $(x,k)\in \Pi$ if and only if
  $(x',k')\in \Pi$ and (b) $|x'|,k'\leq g(k)$, where $g$ is some
  computable function.  The output instance $x'$ is called the
  kernel, and the function $g$ is referred to as the size of the
  kernel. If $g(k)=k^{O(1)}$, then we say that
  $\Pi$ admits a polynomial kernel.
\end{definition}
For many parameterized problems, it is well established that the existence of a polynomial kernel would
imply the collapse of the polynomial hierarchy to the third level (or more precisely, \caveat{}).
Therefore, it is considered unlikely that these problems would admit polynomial-sized kernels.
For showing kernel lower bounds, we simply establish reductions from these problems.
}

\longversion{
\begin{definition}{\rm \bf[Polynomial Parameter Transformation]}
\label{def:ppt-reduction} {\rm\cite{BodlaenderThomasseYeo2009}}
Let $\Gamma_1$ and $\Gamma_2$ be parameterized problems. We say that $\Gamma_1$ is
polynomial time and parameter reducible to $\Gamma_2$, written
$\Gamma_1\le_{Ptp} \Gamma_2$, if there exists a polynomial time computable
function $f:\Sigma^{*}\times\mathbb{N}\to\Sigma^{*}\times\mathbb{N}$, and a
polynomial $p:\mathbb{N}\to\mathbb{N}$, and for all
$x\in\Sigma^{*}$ and $k\in\mathbb{N}$, if
$f\left(\left(x,k\right)\right)=\left(x',k'\right)$, then
$\left(x,k\right)\in \Gamma_1$ if and only if $\left(x',k'\right)\in \Gamma_2$,
and $k'\le p\left(k\right)$. We call $f$ a polynomial parameter
transformation (or a PPT) from $\Gamma_1$ to $\Gamma_2$.
\end{definition}

This notion of a reduction is useful in showing kernel lower bounds
because of the following theorem.

\begin{theorem}\label{thm:ppt-reduction}~{\rm\cite[Theorem 3]{BodlaenderThomasseYeo2009}}
  Let $P$ and $Q$ be parameterized problems whose derived
  classical problems are $P^{c},Q^{c}$, respectively. Let $P^{c}$
  be $\NPC$, and $Q^{c}\in$ \NP. Suppose there exists a PPT from
  $P$ to $Q$.  Then, if $Q$ has a polynomial kernel, then $P$ also
  has a polynomial kernel.
\end{theorem}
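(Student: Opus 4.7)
The plan is to chain together the PPT from $P$ to $Q$, the hypothetical polynomial kernelization for $Q$, and the NP-completeness of $P^c$, in order to compress any instance of $P$ back down to a polynomially-sized instance. Concretely, I would start with an arbitrary instance $(x,k)$ of $P$ and first apply the PPT $f$ to obtain, in polynomial time, an instance $(x',k')$ of $Q$ with $k' \le p(k)$ and $(x,k) \in P \iff (x',k') \in Q$. I would then feed $(x',k')$ into the assumed polynomial kernelization $\KK$ for $Q$ to obtain an equivalent instance $(x'',k'')$ with $|x''|, k'' \le g(k') \le g(p(k))$ for some polynomial $g$. Since $g \circ p$ is a polynomial in $k$, the compressed instance has total size polynomial in $k$.

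The next step is the transfer from $Q$ back to $P$, which is where the NP-hardness hypothesis enters. Since $Q^c \in \NP$ and $P^c$ is \NPC, there is a standard polynomial-time many-one reduction $\phi$ from $Q^c$ to $P^c$. I would encode $(x'',k'')$ canonically as a string $w$ of length polynomial in $k$ so that $w \in Q^c$ iff $(x'',k'') \in Q$, and then compute $y = \phi(w)$; by construction $|y|$ is polynomial in $|w|$, hence polynomial in $k$, and $y \in P^c$ iff $w \in Q^c$. Decoding $y$ (together with a naturally attached parameter, for instance $|y|$ itself) back to an instance of the parameterized problem $P$ yields the desired kernel: chaining the equivalences, $(x,k) \in P$ iff $(x',k') \in Q$ iff $(x'',k'') \in Q$ iff $w \in Q^c$ iff $y \in P^c$ iff the decoded instance lies in $P$, and the total size is bounded by a polynomial in $k$.

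The step I expect to be the main obstacle is the last one: the conceptual ``round trip'' between the parameterized problems and their derived classical versions. In particular, one must verify carefully that (i) an instance of $Q$ can be faithfully encoded as an instance of $Q^c$ so that classical membership tracks parameterized membership, and (ii) the image $y \in P^c$ produced by $\phi$ can be equipped with a parameter making it a genuine instance of $P$ with the correct membership, rather than merely an equivalent classical string. Both points are routine under the standard convention that $P^c$ is the classical problem obtained from $P$ by forgetting the parameter, but they require a short but careful argument about the encodings to make the kernelization output a legitimate parameterized instance whose size and parameter are both polynomial in $k$.
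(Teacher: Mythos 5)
The paper does not prove this statement at all---it is imported verbatim with a citation to Bodlaender, Thomass\'e, and Yeo---so there is no internal proof to compare against; your argument is the standard proof of that result (compose the PPT with the assumed polynomial kernel for $Q$, then use $Q^{c}\in\NP$ and the NP-completeness of $P^{c}$ to map the compressed instance back into $P$), and it is correct. The only point to state precisely is the one you already flag: the derived classical problem does not ``forget'' the parameter but encodes it (in unary) alongside the instance, which is exactly what lets you decode $y$ into a genuine pair $(y',k''')$ of $P$ with both $|y'|$ and $k'''$ bounded by a polynomial in $k$.
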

}

\section{Classical Complexity Results}

\longversion{Yin et al.~}\cite{YinVAH16} showed that the \OD problem is polynomial time solvable for the plurality voting rule when we have only $2$ candidates. On the other hand, they also showed that the \OD problem is \NPC when we have an {\em unbounded} number of candidates. We begin with improving their \NP-completeness result by showing that the \OD problem becomes \NPC even when we have only $3$ candidates and the attacker can attack any number of voter groups. Towards that, we reduce the $k$-{\sc Sum} problem to the \OD problem. The $k$-{\sc Sum} problem is defined as follows.

\begin{definition}[$k$-{\sc Sum}]
 Given a set of $n$ positive integers $\WW = \{w_i, i\in[n]\},$ and two positive integers $k\le n$ and $M$, does there exist an index set $\II\subset[n]$ with $|\II|= k$ such that $\sum_{i\in\II} w_i = M$?
\end{definition}

The $k$-{\sc Sum} problem can be easily proved to be \NPC by modifying the \NP-completeness proof of the {Subset Sum} problem in \longversion{Cormen et al.~}\cite{Cormen}.
We also need the following structural result for normalized scoring rules which has been used before~\cite{baumeister2011computational,journalsDeyMN16}.

\begin{lemma}\label{lem:scoringrule}
 Let $\mathcal{C} = \{c_1, \ldots, c_m\}$ be a set of candidates and $\overrightarrow{\alpha}$ a normalized score vector of length $|\mathcal{C}|$. Let $x, y\in\CC, x\ne y,$ be any two arbitrary candidates. Then there exists a profile $\PP_x^y$ consisting of $m$ votes such that we have the following.\\
 $s_{\PP_x^y}(x)+1 = s_{\PP_x^y}(y)-1 = s_{\PP_x^y}(a) \text{ for every } a\in\CC\setminus\{x,y\}$
\end{lemma}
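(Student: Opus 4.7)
The plan is to use a ``cyclic shift plus one swap'' construction that is standard in the scoring-rule literature. First I would invoke the normalization hypothesis to fix an index $j$ with $1 \le j \le m-1$ such that $\alpha_j - \alpha_{j+1} = 1$ (and $\alpha_k = 0$ for $k > j$); such a $j$ is guaranteed by the definition of a normalized score vector given in the preliminaries.

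Next I would construct the profile $\PP_x^y$ explicitly. Let $d_1, d_2, \ldots, d_{m-2}$ be an arbitrary enumeration of $\CC \setminus \{x,y\}$, and consider the base vote
\[
  v_1 \;=\; (d_1, d_2, \ldots, d_{j-1}, \, x, \, y, \, d_j, d_{j+1}, \ldots, d_{m-2}),
\]
in which $x$ sits at position $j$ and $y$ sits at position $j+1$. Let $v_2, v_3, \ldots, v_m$ denote the $m-1$ nontrivial cyclic left-shifts of $v_1$. In this collection of $m$ votes every candidate occupies each of the $m$ positions exactly once, so every candidate would receive score $\sum_{i=1}^{m} \alpha_i$.

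Now I would replace $v_1$ by the vote $v_1'$ obtained from $v_1$ by swapping $x$ and $y$; that is, in $v_1'$ the candidate $y$ is at position $j$ and $x$ is at position $j+1$, while all other candidates stay put. Define $\PP_x^y \eqdef \{v_1', v_2, \ldots, v_m\}$. Compared to the fully balanced profile, the score of $y$ goes up by $\alpha_j - \alpha_{j+1} = 1$, the score of $x$ goes down by the same amount, and every other candidate's score is unchanged. Hence
\[
s_{\PP_x^y}(x) = \Bigl(\sum_{i=1}^{m}\alpha_i\Bigr) - 1, \quad s_{\PP_x^y}(y) = \Bigl(\sum_{i=1}^{m}\alpha_i\Bigr) + 1, \quad s_{\PP_x^y}(a) = \sum_{i=1}^{m}\alpha_i \text{ for } a \in \CC \setminus \{x,y\},
\]
which is exactly the identity claimed by the lemma.

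The only delicate point is that such a position $j \le m-1$ is actually available inside a length-$m$ vote, so that the adjacent pair $(j,j+1)$ of positions exists in $v_1$; this is immediate from the normalization condition, which forces $\alpha_{j+1}$ to be well-defined (equal to $0$). Everything else is a direct verification, so I do not anticipate a hard step in the write-up.
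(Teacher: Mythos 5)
Your construction is correct: the cyclic-rotation profile balances all scores at $\sum_{i=1}^m \alpha_i$, and swapping $x$ and $y$ across the adjacent positions $j,j+1$ with $\alpha_j-\alpha_{j+1}=1$ shifts exactly $\pm 1$ onto $x$ and $y$, giving $s_{\PP_x^y}(x)+1=s_{\PP_x^y}(y)-1=s_{\PP_x^y}(a)$ with exactly $m$ votes. The paper itself does not prove the lemma but cites earlier work, and your argument is precisely the standard construction those references rely on, so there is nothing to add.
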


For any two candidates $x,y\in\CC, x\ne y$, we use $\PP_x^y$ to denote the profile as defined in \Cref{lem:scoringrule}. We are now ready to present our \NP-completeness result for the \OD problem for the scoring rules even in the presence of $3$ candidates only. In the interest of space, we will provide only a sketch of a proof for a several results.

\begin{theorem}\label{thm:npc}
 The \OD problem is \NPC for every scoring rule even if the number of candidates is $3$ and the attacker can attack any number of the voter groups.
\end{theorem}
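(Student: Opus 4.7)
The problem is clearly in NP: because the attacker is unconstrained, for any proposed defense $\II$ with $|\II|\leq k_d$ the worst possible attack is to remove all of $[n]\setminus\II$, so a polynomial-size certificate is just such a set $\II$, and verification reduces to checking that the winners of $(\GG_i)_{i\in\II}$ and $(\GG_i)_{i\in[n]}$ coincide, which is polynomial-time for every scoring rule.

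For hardness, I plan a reduction from \kSum. Because the attacker is unconstrained, the \OD question collapses to: is there $\II\subseteq[n]$ with $|\II|\leq k_d$ such that the sub-election $(\GG_i)_{i\in\II}$ has the same winner as the full election? Given an instance $(w_1,\dots,w_n;k,M)$ of \kSum, I build a $3$-candidate election on $\{a,b,c\}$ whose full-election winner will be the singleton $\{a\}$, with $n$ ``weight groups'' $\GG_1,\dots,\GG_n$ together with a constant number of fixed ``offset groups'', and I set $k_d = k + O(1)$ so that the defender is effectively choosing exactly $k$ weight groups plus all offsets. Using \Cref{lem:scoringrule}, each weight group $\GG_i$ is assembled from $O(1)$ copies of the primitive profiles $\PP_x^y$ (which work uniformly for every normalized score vector on three candidates) so that, for any $\II'\subseteq[n]$ of size $k$, the aggregate score contributions satisfy $s(a)-s(b) \;=\; 1 + 2\bigl(M - \sum_{i\in\II'} w_i\bigr)$ and $s(a)-s(c) \;=\; 1 + 2\bigl(\sum_{i\in\II'} w_i - M\bigr)$, once the offset groups are included. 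Both differences are strictly positive, so $a$ is the unique winner of the sub-election, if and only if $\sum_{i\in\II'} w_i = M$, which is exactly the \kSum condition. A small amount of padding of the \kSum instance (adding $O(1)$ dummy weights, which preserves \NPH) guarantees that the same balance gives $\{a\}$ as the unique winner in the full election.

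The main technical obstacle is ruling out spurious defenses. A defense that omits an offset group loses the $+1$ and makes the two inequalities $s(a)>s(b)$ and $s(a)>s(c)$ mutually contradictory (their sum becomes $0$), so $a$ cannot be the unique winner. A defense of the wrong cardinality is excluded by a cardinality-sensitive design: scaling each $w_i$ and $M$ by $k$ and choosing the per-group differences so that the aggregate is of the form $\bigl(|\II'|{-}k\bigr)\cdot L + 2(M-\sum w_i)$ for a sufficiently large $L$, so that $|\II'|\neq k$ forces the dominant term to make one of the two inequalities fail. The empty defense fails because the full-election winner is $\{a\}$ while the empty sub-election produces $\{a,b,c\}$.

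Putting these ingredients together, the constructed \OD instance is a \YES-instance iff the given \kSum instance is, which gives \NPH for every scoring rule with $|\CC|=3$ even when $k_a$ is unbounded; together with membership, this yields \NP-completeness.
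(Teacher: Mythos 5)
The central step of your plan---``because the attacker is unconstrained, the \OD question collapses to: is there $\II$ with $|\II|\le k_d$ such that the sub-election $(\GG_i)_{i\in\II}$ has the same winner as the full election''---is false, and both your \NP-membership argument and your reduction rest on it. A defense must survive \emph{every} subset of the undefended groups being deleted, not only the deletion of all of them; deleting a group can help the original winner in one pairwise score comparison and hurt it in another, so the attacker can cherry-pick. For a concrete counterexample under plurality, take a defended group with $10$ votes for $a$ and undefended groups with $10$ votes for $b$ and $11$ votes for $a$: the full election and the defended sub-election both have winner $\{a\}$, yet deleting only the second undefended group yields the tie $\{a,b\}$, so the defense fails. (Membership in \NP is still true, but it is argued via the fact that, given a fixed defense, an optimal attack for a scoring rule can be computed in polynomial time---not by checking the single ``delete everything undefended'' attack.)

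The same issue breaks your construction. You engineer $s(a)-s(b)=1+2\bigl(M-\sum_{i\in\II'}w_i\bigr)$ and $s(a)-s(c)=1+2\bigl(\sum_{i\in\II'}w_i-M\bigr)$, so $a$ is the unique winner exactly when the surviving weight-sum equals $M$. But the attacker may delete only \emph{part} of the undefended groups, and any intermediate weight-sum different from $M$ already changes the outcome; hence, whenever some undefended group carries positive weight, no defense survives all attacks, and your forward direction fails regardless of the \kSum answer. The paper's construction avoids this by building in a monotonicity that makes ``delete all undefended $\GG_i$'' provably the worst attack: each removable group $\GG_i$ ($w_i$ copies of $\PP_a^c$ plus $M^\pr-w_i$ copies of $\PP_b^c$, with $M^\pr>\sum_i w_i$) gives the original winner $c$ a strictly higher score than both $a$ and $b$, and all of $c$'s deficit is concentrated in the single group $\hat{\GG}$, which the attacker has no incentive to touch; the exact-sum condition then emerges from a two-sided inequality squeeze (the defended weight must be at least $M$ for $c$ to survive against $a$ and at most $M$ to survive against $b$ after the worst attack), not from an exact-equality winner condition that partial attacks can perturb. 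A smaller point: a group made of $O(1)$ copies of the $\PP_x^y$ profiles cannot encode a binary-encoded weight $w_i$; you need multiplicities proportional to $w_i$, as in the paper.
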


\begin{proof}
 The \OD problem for every scoring rule can be shown to belong to \NP by using a defense strategy~$S$ (a subset of at most $k_d$ voter groups) as a certificate. The fact that the certificate can be validated in polynomial time involves checking if there exists a successful attack despite protecting all groups in $S$. This can be done in polynomial time, but due to space constraints, we defer a detailed argument to a full version of this manuscript. We now turn to the reduction from \kSum{}.

  Let $\overrightarrow{\alpha}$ be any normalized score vector of length $3$. The \OD problem for the scoring rule based on $\overrightarrow{\alpha}$ belongs to \NP. Let $(\WW=\{w_1, \ldots, w_n\}, k, M)$ be an arbitrary instance of the \kSum problem. We can assume, without loss of generality, that $8$ divides $M$ and $w_i$ for every $i\in[n]$; if this is not the case, we replace $M$ and $w_i$ by respectively $8M$ and $8w_i$ for every $i\in[n]$ which clearly is an equivalent instance of the original instance. Let us also assume, without loss of generality, that $2k<n$ (if not then add enough copies of $M+1$ to \WW) and $M<\sum_{i=1}^n w_i$ (since otherwise, it is a trivial \NO instance). We construct the following instance of the \OD problem for the scoring rule based on $\overrightarrow{\alpha}$. Let $M^\pr$ be an integer such that $M^\pr > \sum_{i=1}^n w_i$ and $8$ divides $M^\pr$. We have $3$ candidates, namely $a$, $b$, and $c$. We have the following voter groups.

 \begin{itemize}[topsep=0pt,itemsep=0pt,leftmargin=*]
  \item For every $i\in[n]$, we have a voter group $\GG_i$ consisting of $w_i$ copies of $\PP_a^c$ (as defined in \Cref{lem:scoringrule}) and $M^\pr-w_i$ copies of $\PP_b^c$. Hence, we have the following.\\
  $s_{\GG_i}(c) = s_{\GG_i}(a) + M^\pr + w_i = s_{\GG_i}(b) + 2M^\pr - w_i$
  \item We have one voter group $\hat{\GG}$ consisting of $\nfrac{(kM^\pr+M)}{2}-3$ copies of $\PP_c^a$, $\nfrac{(kM^\pr-M)}{2}-1$ copies of $\PP_c^b$, and $\nfrac{(kM^\pr-M)}{2}-1$ copies of $\PP_a^b$. We have the following.\\
  $s_{\hat{\GG}}(c) = s_{\hat{\GG}}(a) - (kM^\pr + M -6) = s_{\hat{\GG}}(b) - (2kM^\pr-M-6)$
 \end{itemize}

 Let \QQ be the resulting profile; that is $\QQ = \cup_{i=1}^n \GG_i \cup \hat{\GG}$. We have $s_{\QQ}(c) = s_{\QQ}(a) + (n-k)M^\pr +\sum_{i=1}^n w_i - M + 6 = s_{\QQ}(b) + (n-2k)M^\pr + M - \sum_{i=1}^n w_i +6$. Since $n>2k$ and $M^\pr > \sum_{i=1}^n w_i$, we have $s_{\QQ}(c) > s_{\QQ}(a)$ and $s_{\QQ}(c)>s_{\QQ}(b)$. Thus the candidate $c$ wins the election uniquely. We define $k_d$, the maximum number of voter groups that the defender can defend, to be $k$. We define $k_a$, the maximum number of voter groups that the attacker can attack, to be $n+1$. This finishes the description of the \OD instance. We claim that the two instances are equivalent.

 In the forward direction, let the \kSum instance be a \YES instance and $\II\subset [n]$ with $|\II|= k$ be an index set such that $\sum_{i\in\II} w_i = M$. Let us consider the defense strategy where the defender protects the voter groups $\GG_i$ for every $i\in\II$. Since $\sum_{i\in\II} w_i = M$, we have $\sum_{i\in\II} (M^\pr-w_i) =  kM^\pr -M$. Let \HH be the profile of voter groups corresponding to the index set \II; that is, $\HH = \cup_{i\in\II} \GG_i$. Let $\HH^\pr$ be the profile remaining after the attacker attacks some voter groups. Without loss of generality, we can assume that the attacker does not attack the voter group $\hat{\GG}$ since otherwise the candidate $c$ continues to win uniquely. We thus obviously have $\HH\cup\hat{\GG}\subseteq\HH^\pr$. We have $s_{\HH\cup\hat{\GG}}(c) = s_{\HH\cup\hat{\GG}}(a) + kM^\pr + \sum_{i\in\II} w_i - (kM^\pr + M -6) = s_{\HH\cup\hat{\GG}}(a) + 6$ and $s_{\HH\cup\hat{\GG}}(c) = s_{\HH\cup\hat{\GG}}(b) + 2kM^\pr - \sum_{i\in\II}w_i - (2kM^\pr-M-6) = s_{\HH\cup\hat{\GG}}(b) + 6$. Since the candidate $c$ receives as much score as any other candidate in the voter group $\GG_i$ for every $i\in[n]$, we have $s_{\HH^\pr\cup\hat{\GG}}(c) \ge s_{\HH^\pr\cup\hat{\GG}}(a) + 6$ and $s_{\HH^\pr\cup\hat{\GG}}(c) \ge s_{\HH^\pr\cup\hat{\GG}}(b) + 6$. Hence, the candidate $c$ wins uniquely in the resulting profile $\HH^\pr$ after the attack and thus the defense is successful.

 In the other direction, let the \OD instance be a \YES instance. Without loss of generality, we can assume that the attacker does not attack the voter group $\hat{\GG}$ and thus the defender does not defend the voter group $\hat{\GG}$. We can also assume, without loss of generality, that the defender defends exactly $k$ voter groups since the candidate $c$ receives as much score as any other candidate in the voter group $\GG_i$ for every $i\in[n]$. Let $\II\subset[n]$ with $|\II|=k$ such that defending all the voter groups $\GG_i, i\in\II$ is a successful defense strategy. We claim that $\sum_{i\in\II} w_i \ge M$. Suppose not, then let us assume that $\sum_{i\in\II} w_i < M$. Since, $w_i$ is divisible by $8$ and positive for every $i\in[n]$ and $m$ is divisible by $8$, we have $\sum_{i\in\II} w_i \le M-8$. Let \HH be the profile of voter groups corresponding to the index set \II; that is, $\HH = \cup_{i\in\II} \GG_i$. We have $s_{\HH\cup\hat{\GG}}(c) = s_{\HH\cup\hat{\GG}}(a) + kM^\pr + \sum_{i\in\II} w_i - (kM^\pr + M -6) \le s_{\HH\cup\hat{\GG}}(a) + M - 8 - M + 6 = s_{\HH\cup\hat{\GG}}(a) - 2$. Hence attacking the voter groups $\GG_i, i\in[n]\setminus\II$ makes the score of $c$ strictly less than the score of $a$. This contradicts our assumption that defending all the voter groups $\GG_i, i\in\II$ is a successful defense strategy. Hence we have $\sum_{i\in\II} w_i \ge M$. We now claim that $\sum_{i\in\II} w_i \le M$. Suppose not, then let us assume that $\sum_{i\in\II} w_i > M$. Since, $w_i$ is divisible by $8$ and positive for every $i\in[n]$ and $m$ is divisible by $8$, we have $\sum_{i\in\II} w_i \ge M+8$. Let $\HH^\pr$ be the profile of voter groups corresponding to the index set \II; that is, $\HH^\pr = \cup_{i\in\II} \GG_i$. We have $s_{\HH^\pr\cup\hat{\GG}}(c) = s_{\HH^\pr\cup\hat{\GG}}(b) + 2kM^\pr - \sum_{i\in\II}w_i - (2kM^\pr-M-6) \le s_{\HH^\pr\cup\hat{\GG}}(b) - (M+8) + M + 6 = s_{\HH^\pr\cup\hat{\GG}}(b) - 2$. Hence attacking the voter groups $\GG_i, i\in[n]\setminus\II$ makes the score of $c$ strictly less than the score of $b$. This contradicts our assumption that defending all the voter groups $\GG_i, i\in\II$ is a successful defense strategy. Hence we have $\sum_{i\in\II} w_i \le M$. Therefore we have $\sum_{i\in\II} w_i = M$ and thus the \kSum instance is a \YES instance.
\end{proof}

In the proof of \Cref{thm:npc}, we observe that the reduced instance of the \OD problem viewed as an instance of the \OA problem is a \NO instance if and only if the \kSum instance is a \YES instance. Hence, the same reduction as in the proof of \Cref{thm:npc} gives us the following result for the \OA problem.

\begin{corollary}\label{cor:oa_conph}
 The \OA problem is \coNPH for every scoring rule even if the number of candidates is $3$ and the attacker can attack any number of voter groups.
\end{corollary}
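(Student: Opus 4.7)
The plan is to reuse the construction from the proof of \Cref{thm:npc} verbatim: given a \kSum instance $(\WW, k, M)$, construct the same 3-candidate instance with voter groups $\{\GG_i : i \in [n]\} \cup \{\hat{\GG}\}$, attacker budget $k_a = n+1$, and defender budget $k_d = k$, but now interpret it as an \OA instance. I would then show that this \OA instance is a \NO instance if and only if the \kSum instance is a \YES instance. Since \kSum is \NPC, this is a polynomial-time many-one reduction from an \NPC problem to the complement of \OA and suffices to establish that \OA is \coNPH.

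The first step is to narrow down the attacker's candidate winning strategies to exactly one, namely the attack $T^\star$ that targets every $\GG_i$ and spares $\hat{\GG}$. The key fact is that within any single $\GG_i$ the candidate $c$ beats both $a$ and $b$ by $\Theta(M^\pr)$, which more than compensates for the $\Theta(kM^\pr)$ advantage $\hat{\GG}$ gives to $a$ and $b$ as soon as more than $k$ of the $\GG_i$'s survive together with $\hat{\GG}$; this is already implicit in the score computations of \Cref{thm:npc}. Using this, for any attack $T$ that leaves at least one $\GG_j$ unattacked, the defender can choose $D \subseteq T$ with $|D| \leq k$ so that at least $k+1$ of the $\GG_i$'s survive, and $c$ then wins uniquely whether or not $\hat{\GG}$ survives. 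And if $T$ attacks all $n+1$ groups, the defender defends any $k$ of the $\GG_i$'s (leaving $\hat{\GG}$ to be deleted), yielding a pure $\GG_i$-profile where $c$ dominates. Hence the only attack that can possibly succeed is $T^\star$.

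For the attack $T^\star$, the group $\hat{\GG}$ is never attacked and therefore always survives, while the defender may save any $D \subseteq T^\star$ with $|D| \leq k$. The surviving profile is exactly $\hat{\GG} \cup \bigcup_{i \in D} \GG_i$, which is the kind of profile analysed in both directions of the proof of \Cref{thm:npc}. Reusing those computations verbatim, the candidate $c$ is the unique winner of this profile if and only if $|D| = k$ and $\sum_{i \in D} w_i = M$. Thus the defender has a winning response to $T^\star$ exactly when the \kSum instance is a \YES instance, so $T^\star$ is a successful attack exactly when \kSum is a \NO instance. Combined with the previous paragraph, this completes the desired equivalence.

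The main obstacle is the case analysis of the second paragraph, where one has to exhibit an explicit defender response to each alternative attack. All such responses rely on the same basic counting argument --- the dominance margin $\Theta(M^\pr)$ of $c$ within each $\GG_i$ against the $\Theta(kM^\pr)$ deficit induced by $\hat{\GG}$ --- so the verification is routine once this single observation is in place.
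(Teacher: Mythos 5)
Your proposal is correct and follows essentially the same route as the paper: the paper's argument for \Cref{cor:oa_conph} is precisely the observation that the reduced instance from \Cref{thm:npc}, read as an \OA instance with $k_a=n+1$ and $k_d=k$, is a \NO instance if and only if the \kSum instance is a \YES instance. Your write-up merely makes explicit what the paper leaves implicit (reducing every attack to the canonical one sparing $\hat{\GG}$ and reusing the score computations), and those details check out.
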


We now prove a similar hardness result as of \Cref{thm:npc} for the Condorcet voting rule.

\begin{theorem}\label{thm:oa_condorcet_npc}
 The \OD problem is \NPC for the Condorcet voting rule even if the number of candidates is $3$ and the attacker can attack any number of voter groups.
\end{theorem}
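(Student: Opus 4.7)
\begin{proof_sketch}
The plan is to mirror the argument of \Cref{thm:npc}, reducing once more from \kSum, but encoding each weight $w_i$ in the contribution of a group to pairwise margins rather than to scores. Membership in \NP follows by essentially the same reasoning as before: once a candidate defence $\II$ is guessed, checking whether any attack of size at most $k_a$ can strip $c$ of its Condorcet victory reduces, for each rival $x \in \{a, b\}$, to greedily removing the $k_a$ un-defended groups of largest contribution to $\DD(c, x)$, which is plainly polynomial.

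For hardness, let $(\WW = \{w_1, \ldots, w_n\}, k, M)$ be a \kSum{} instance, where after trivial preprocessing one may assume $W \eqdef \sum_i w_i > M$ and $n > 2k$. I choose a large even parameter $M' \ge 2W$, so that $M'/2 \ge w_i$ for every $i$ and $kM'/2 \ge M$. Over the candidate set $\{a, b, c\}$, for every $i \in [n]$ I construct a group $\GG_i$ consisting of $M'/2$ copies of $c \suc a \suc b$ together with $M'/2 - w_i$ copies of $a \suc c \suc b$; a direct computation gives $\GG_i$ a contribution of exactly $(w_i,\, M' - w_i)$ to $(\DD(c, a),\, \DD(c, b))$. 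An auxiliary group $\hat{\GG}$ composed of $kM'/2 - M$ copies of $b \suc c \suc a$ and $kM'/2 - 1$ copies of $b \suc a \suc c$ then contributes $(1 - M,\, M + 1 - kM')$ to the same pair of margins. Setting $k_d \eqdef k$ and $k_a \eqdef n + 1$, a routine check confirms that $c$ is the original Condorcet winner with margins $W - M + 1$ over $a$ and $(n - k)M' + M - W + 1$ over $b$, both strictly positive.

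The key observation for correctness is that both coordinates of $\hat{\GG}$'s contribution are negative, so removing $\hat{\GG}$ only strengthens $c$ in every pairwise contest; hence the attacker never benefits from attacking $\hat{\GG}$, and the defender only wastes a slot by defending it. Under the attacker's best response --- namely, attacking every un-defended $\GG_j$ --- the surviving margins become $S + 1 - M$ and $k' M' - S + M + 1 - kM'$, where $S \eqdef \sum_{i \in \II} w_i$ and $k' \eqdef |\II|$. Positivity of the first margin forces $S \ge M$. Positivity of the second rearranges to $S < M + 1 - (k - k') M'$, which, because $M' \ge 2W$ is enormous compared to $M$, is feasible only when $k' = k$; in that case it collapses to $S \le M$. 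Combining gives $S = M$, yielding a \kSum{} solution, and the forward direction is the mirror image (defending a \kSum{} witness makes both surviving margins exactly $1$).

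The chief obstacle is the arithmetic needed to make $\hat{\GG}$ realise the precise target pair $(1 - M,\, M + 1 - kM')$ using only non-negative multiplicities of genuine vote types, which is what motivates both the specific composition above and the choice $M' = 2W$. The same large-$M'$ choice is exactly what creates the slack $(k - k') M'$ ruling out any successful defence that uses fewer than $k$ groups, so no separate case analysis is required for $k' < k$.
\end{proof_sketch}
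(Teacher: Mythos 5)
Your reduction is correct and follows essentially the same route as the paper's proof: a reduction from \kSum to the $3$-candidate Condorcet \OD problem with one group per weight contributing margins $(w_i,\,M'-w_i)$ for the original winner, an auxiliary group threatening that winner from both sides by roughly $M$ and $kM'-M$, and $k_d=k$, $k_a=n+1$, with the equivalence argued by squeezing $\sum_{i\in\II}w_i$ between $M$ from below and above. The only differences are cosmetic — you take $M'$ large ($\ge 2W$) and realize the margins by explicit vote multisets, whereas the paper sets $M'=\max_i w_i$ and specifies the pairwise margins directly.
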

\longversion{
\longversion{\begin{proof}}
\shortversion{\begin{proof_sketch}}
 The \OD problem for the Condorcet voting rule clearly belongs to \NP. To show \NP-hardness, we reduce an arbitrary instance of the \kSum problem to the \OD problem for the Condorcet voting rule. Let $(\{w_1, \ldots, w_n\}, k, M)$ be an arbitrary instance of the \kSum problem. We construct the following instance of the \OD problem for the Condorcet voting rule. Let $M^\pr = \max\{w_i:i\in[n]\}$. We have $3$ candidates, namely $a$, $b$, and $c$. We have the following voter groups.
 \begin{itemize}[topsep=0pt,itemsep=0pt]
  \item For every $i\in[n]$, we have a voter group $\GG_i$ where $\DD_{\GG_i} (a,b) = 2w_i, \DD_{\GG_i}(a,c)=2(M^\pr-w_i)$, and $\DD_{\GG_i}(b,c) = 0$.
  \item We have one voter group $\hat{\GG}$ where the candidates $b$ and $c$ receive respectively $\DD_{\hat{\GG}} (b,a) = 2M-1, \DD_{\hat{\GG}} (c,a) = 2(kM^\pr-M)-1$, and $\DD_{\hat{\GG}}(b,c)=1$.
 \end{itemize}
 We define $k_d$, the maximum number of voter groups that the defender can defend, to be $k$. We define $k_a$, the maximum number of voter groups that the attacker can attack, to be $n+1$. We observe that the candidate $a$ is the Condorcet winner of the election. This finishes the description of the \OD instance. \shortversion{The proof of equivalence of the two instances is similar in spirit as the proof of \Cref{thm:npc}.}\longversion{We claim that the two instances are equivalent.

 In the forward direction, let the \kSum instance be a \YES instance and $\II\subset [n]$ with $|\II|= k$ be an index set such that $\sum_{i\in\II} w_i = M$. Let us consider the defense strategy where the defender protects the voter groups $\GG_i$ for every $i\in\II$. Since $\sum_{i\in\II} w_i = M$, we have $\sum_{i\in\II} (M^\pr-w_i) =  kM^\pr -M$. Without loss of generality, we can assume that the attacker does not attack the voter group $\hat{\GG}$. We observe that the candidate $a$ is the Condorcet winner of the election even when the attacker attacks all the voter groups $\GG_j, j\in[n]\setminus\II$. Hence the \OD instance is a \YES instance.

 In the other direction, let the \OD instance be a \YES instance. Without loss of generality, we can assume that the attacker does not attack the voter group $\hat{\GG}$ and thus the defender does not defend the voter group $\hat{\GG}$. We can also assume, without loss of generality, that the defender defends exactly $k$ voter groups since the candidate $a$ continues to be the Condorcet winner if the attacker attacks at most $k-1$ voter groups. Let $\II\subset[n]$ with $|\II|=k$ such that defending all the voter groups $\GG_i, i\in\II$ is a successful defense strategy. We claim that $\sum_{i\in\II} w_i \ge M$. Suppose not, then let us assume that $\sum_{i\in\II} w_i < M$. Then attacking the voter groups $\GG_i, i\in[n]\setminus\II$ makes the candidate $b$ defeat the candidate $a$ in pairwise election. This contradicts or assumption that defending all the voter groups $\GG_i, i\in\II$ is a successful defense strategy. Hence we have $\sum_{i\in\II} w_i \ge M$. We now claim that $\sum_{i\in\II} w_i \le M$. Suppose not, then let us assume that $\sum_{i\in\II} w_i > M$. Then attacking the voter groups $\GG_i, i\in[n]\setminus\II$ makes the candidate $c$ defeat the candidate $a$ in pairwise election. This contradicts or assumption that defending all the voter groups $\GG_i, i\in\II$ is a successful defense strategy. Hence we have $\sum_{i\in\II} w_i \le M$. Therefore we have $\sum_{i\in\II} w_i = M$ and thus the \kSum instance is a \YES instance.}
\longversion{\end{proof}}
\shortversion{\end{proof_sketch}}
}
In the proof of \Cref{thm:oa_condorcet_npc}, we observe that the reduced instance of \OD viewed as an instance of the \OA problem is a \NO instance if and only if the \kSum instance is a \YES instance. Hence, the same reduction as in the proof of \Cref{thm:oa_condorcet_npc} gives us the following result for the \OA problem.

\begin{corollary}\label{cor:oa_condorcet_conph}
 The \OA problem is \coNPH for the Condorcet voting rule even if the number of candidates is $3$ and the attacker can attack any number of voter groups.
\end{corollary}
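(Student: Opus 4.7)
The plan is to observe that the reduction constructed in the proof of \Cref{thm:oa_condorcet_npc} is simultaneously a polynomial-time many-one reduction from \kSum to the \emph{complement} of \OA, so that \NP-hardness of \kSum immediately yields \coNP-hardness of \OA. Concretely, I would keep the voter groups $\{\GG_i\}_{i\in[n]}\cup\{\hat{\GG}\}$ and the budgets $k_a=n+1$, $k_d=k$ exactly as in the theorem, under which $a$ is the Condorcet winner of the unattacked election.

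The main claim to verify is that this reduced instance is a \NO instance of \OA if and only if the \kSum instance is a \YES instance. For one direction, given a \kSum solution $\SS\subseteq[n]$ with $|\SS|=k$ and $\sum_{i\in\SS}w_i=M$, the defender can answer any attack $\JJ$ by defending $\SS\cap\JJ$ (size at most $k_d$); the surviving profile then contains $\{\GG_i:i\in\SS\}\cup\{\hat{\GG}\}$, and the margin analysis from \Cref{thm:oa_condorcet_npc} shows that $a$ remains the Condorcet winner (the contributions of any additional unattacked $\GG_i$'s only help $a$ because $\DD_{\GG_i}(a,b)=2w_i\ge 0$ and $\DD_{\GG_i}(a,c)=2(M^\pr-w_i)\ge 0$). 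Since this works for every attack $\JJ$, the \OA instance is \NO.

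For the other direction, I would exhibit the specific attack $\JJ^\star=[n]$ (attacking every $\GG_i$ but leaving $\hat{\GG}$ alone) and argue that if no \kSum solution exists then no defense of size at most $k$ can restore $a$ as Condorcet winner. Any such defense protects some $\SS'\subseteq[n+1]$ with $|\SS'|\le k$; since $\hat{\GG}$ is not attacked, defending it wastes a slot, so without loss of generality $\SS'\subseteq[n]$ and the surviving profile is $\{\GG_i:i\in\SS'\}\cup\{\hat{\GG}\}$. Reusing the pairwise margin inequalities from the theorem ($a$ beats $b$ iff $\sum_{i\in\SS'}w_i\ge M$, and $a$ beats $c$ iff $|\SS'|=k$ and $\sum_{i\in\SS'}w_i\le M$) shows that the defense succeeds only when $|\SS'|=k$ and $\sum_{i\in\SS'}w_i=M$, i.e., exactly when $\SS'$ is a \kSum solution. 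If \kSum is \NO, then no such $\SS'$ exists, so $\JJ^\star$ is a successful attack and the \OA instance is \YES.

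The main technical subtlety is ruling out defenses that use strictly fewer than $k$ slots on the $\GG_i$'s (for instance, by saving one slot for $\hat{\GG}$); here I would use the integrality baked into the margins via the $-1$ offsets in $\DD_{\hat{\GG}}(b,a)=2M-1$ and $\DD_{\hat{\GG}}(c,a)=2(kM^\pr-M)-1$ together with $M^\pr\ge 1$, to show that the two pairwise inequalities for $a$ cannot simultaneously hold when $|\SS'|<k$. Combining both directions yields $\kSum\le_p$ (complement of \OA), which proves \OA is \coNPH even with three candidates and an unrestricted attacker budget.
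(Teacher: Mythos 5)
Your proposal is correct and takes essentially the same route as the paper: the paper's own proof of this corollary is exactly the observation that the reduction of \Cref{thm:oa_condorcet_npc}, with the same voter groups and budgets $k_a=n+1$, $k_d=k$, produces an instance that is a \NO instance of \OA if and only if the \kSum instance is a \YES instance, which is what you verify in detail (including the $|\SS'|<k$ subtlety). The only loose end in your forward direction --- an attack that also hits $\hat{\GG}$, so the surviving profile need not contain $\hat{\GG}$ --- is the same case the paper dismisses via its ``without loss of generality, the attacker does not attack $\hat{\GG}$'' in \Cref{thm:oa_condorcet_npc}, so your argument matches the paper's at the same level of rigor.
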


\section{W-Hardness Results}

In this section, we present our hardness results for the \OD and the \OA problems in the parameterized complexity framework. We consider the following parameters for both the problems -- number of candidate ($m$), defender's resource ($k_d$), and attacker's resource ($k_a$). From \Cref{thm:npc,cor:oa_conph,thm:oa_condorcet_npc,cor:oa_condorcet_conph} we immediately have the following result for the \OD and \OA problems parameterized by the number of candidates for both the scoring rules and the Condorcet voting rule.

\begin{corollary}\label{cor:paranph}
 The \OD problem is para-\NPH parameterized by the number of candidates for both the scoring rules and the Condorcet voting rule. The \OA problem is para-\coNPH parameterized by the number of candidates for both the scoring rules and the Condorcet voting rule.
\end{corollary}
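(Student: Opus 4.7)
The plan is to observe that the corollary is an immediate consequence of the four hardness results already established in this section, together with the standard definition of \emph{para-\NPH}. Recall that a parameterized problem is para-\NPH (respectively para-\coNPH) precisely when it is \NPH (respectively \coNPH) for some fixed constant value of the parameter; this is the canonical ``negative'' outcome when a parameter is bounded by a constant yet the problem remains classically hard.

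First I would invoke \Cref{thm:npc}, which states that the \OD problem is \NPC for every scoring rule even when the candidate set has size exactly $3$. Fixing $m=3$ in a reduction establishes that the parameterized problem (\OD, $m$) has an infinite family of \NPH slices, which by definition means it is para-\NPH. An identical argument applied to \Cref{thm:oa_condorcet_npc} yields para-\NP-hardness of (\OD, $m$) for the Condorcet voting rule.

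For the \OA side, I would repeat the same reasoning using \Cref{cor:oa_conph} and \Cref{cor:oa_condorcet_conph}, which respectively give \coNPH-ness of \OA with $3$ candidates for every scoring rule and for the Condorcet rule. Since para-\coNPH is defined symmetrically (\coNPH for a fixed value of the parameter), these corollaries immediately upgrade to para-\coNP-hardness of (\OA, $m$) under both families of voting rules.

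There is no genuine obstacle here: the proof is essentially a one-line appeal to the definitions, and the entire content is already contained in the four cited statements. The only minor point worth flagging in the write-up is to make explicit that the hardness reductions proving \Cref{thm:npc,thm:oa_condorcet_npc,cor:oa_conph,cor:oa_condorcet_conph} all produce instances with exactly $m=3$ candidates, so that the parameter is held at a fixed constant throughout and the para-\NPH / para-\coNPH classification is justified rigorously.
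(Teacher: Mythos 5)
Your proposal is correct and coincides with the paper's own (one-line) argument: the corollary is stated as an immediate consequence of \Cref{thm:npc,cor:oa_conph,thm:oa_condorcet_npc,cor:oa_condorcet_conph}, using exactly the characterization that \NP-hardness (resp.\ \ensuremath{\mathsf{coNP}}-hardness) at a fixed parameter value, here $m=3$, yields para-\NPH (resp.\ para-\coNPH). The only cosmetic slip is the phrase ``infinite family of \NPH slices'': hardness of the single slice $m=3$ already suffices.
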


The \NP-completeness proof for the \OD problem for the plurality voting rule by\longversion{ Yin et al.}~\cite{YinVAH16} is actually a parameter preserving reduction from the \HS problem parameterized by the solution size.\longversion{ The \HS problem is defined as follows.

\begin{definition}[\HS]
 Given a universe \UU, a set $\SS = \{S_i: i\in[t]\}$ of subsets of \UU, and a positive integer $k$ which is at most $|U|$, does there exist a subset $\WW\subseteq\UU$ with $|\WW|= k$ such that $\WW\cap S_i \ne \emptyset$ for every $i\in[t]$. We denote an arbitrary instance of \HS by $(\UU, \SS, k)$.
\end{definition}}
Since the \HS problem parameterized by the solution size $k$ is known to be \WTC~\cite{downey1999parameterized}, the following result immediately follows from Theorem 2 of\longversion{ Yin et al.}~\cite{YinVAH16}.

\begin{observation}[\cite{YinVAH16}] The \OD problem for the plurality voting rule is \WTH parameterized by $k_d$.\label{obs:hard_td}
\end{observation}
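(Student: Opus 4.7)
The plan is to observe that the \NP-hardness reduction of \cite{YinVAH16} from \HS to plurality-\OD is in fact a parameterized reduction that sends the hitting-set parameter $k$ to $k_d$. Since \HS parameterized by the solution size is \WTC, this immediately yields \WT-hardness of plurality-\OD parameterized by $k_d$.

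First I would describe the construction at a high level. Given an instance $(\UU, \SS = \{S_1,\dots,S_t\}, k)$ of \HS, build an election with a distinguished candidate $c$, one challenger $a_i$ per set $S_i$, and one voter group $g_u$ per element $u \in \UU$, plus a small number of padding groups used to calibrate baseline plurality scores. The votes inside each $g_u$ and each padding group are arranged so that (i) $c$ is originally the unique plurality winner, (ii) for every $i \in [t]$, attacking exactly $\{g_u : u \in S_i\}$ causes $a_i$ to overtake $c$, and (iii) no attack of size at most $k_a := \max_i |S_i|$ that fails to cover some $S_i$ changes the winner. Finally set $k_d := k$.

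Next I would verify the equivalence. In the forward direction, if $I \subseteq \UU$ is a hitting set of size $k$, then defending the voter groups $\{g_u : u \in I\}$ is a successful defense: any attack of size at most $k_a$ that is disjoint from the defense cannot fully cover any $S_i$ (because $I$ hits every $S_i$), so by property~(iii) the outcome is preserved. Conversely, if some $I \subseteq \UU$ of size $k_d = k$ is \emph{not} a hitting set, then some $S_i$ is disjoint from $I$; by property~(ii), attacking $\{g_u : u \in S_i\}$, which lies entirely outside the defense and has size at most $k_a$, changes the winner. Since $k_d$ equals the hitting-set parameter $k$ and the construction runs in polynomial time, this is an \FPT-reduction.

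The main obstacle --- and the content of \cite{YinVAH16} --- is engineering the vote counts inside the $g_u$ and the padding groups so that the $c$-versus-$a_i$ competitions are sufficiently \emph{decoupled} to guarantee property (iii): a naive uniform assignment leaves room for attacks that simultaneously chip away at $c$'s score across several sets without covering any one of them, and calibrating the thresholds to rule this out is the delicate part. Once their construction is in hand, the parameter-preservation observation is immediate, because the defender budget in the reduced instance is set to exactly $k$.
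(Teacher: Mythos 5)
Your proposal takes exactly the same route as the paper: it observes that the \NP-hardness reduction of Yin et al.\ from \HS to plurality-\OD maps the hitting-set parameter $k$ to the defender budget $k_d$, so that \WT-hardness follows directly from the \WT-completeness of \HS parameterized by solution size. The paper states no more than this (deferring the construction entirely to \cite{YinVAH16}), and your additional high-level sketch of that construction and of the parameter preservation is consistent with it.
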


We now generalize \Cref{obs:hard_td} to any scoring rule by exhibiting a polynomial parameter transform from the \HS problem parameterized by the solution size.

\shortversion{
\begin{theorem}\shortversion{[$\star$]}
The \OD and \OA problems for every scoring rule is \WTH parameterized by $k_d$.\label{thm:hard_td_od_oa}
\end{theorem}}

\longversion{

\begin{theorem}\shortversion{[$\star$]}
The \OD problem for every scoring rule is \WTH parameterized by $k_d$.\label{thm:hard_td}
\end{theorem}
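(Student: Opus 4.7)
The plan is to exhibit a polynomial parameter transformation from the \HS problem (parameterized by the solution size $k$) to the $r$-\OD problem (parameterized by $k_d$). Since \HS is \WTC{} in $k$~\cite{downey1999parameterized}, this immediately yields \WTH{}ness of $r$-\OD in $k_d$ for every scoring rule $r$.

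Starting from an \HS instance $(\UU,\SS,k)$ with $\UU=\{u_1,\ldots,u_n\}$ and $\SS=\{S_1,\ldots,S_t\}$, I would first pad every set $S_i$ with fresh dummy elements (also appended to $\UU$) so that every set has a common size $p$; this preserves both the \HS answer and the parameter $k$. The constructed \OD instance uses the candidate set $\CC=\{a,b_1,\ldots,b_t\}$, with $m=t+1$, where $b_i$ plays the role of a threat candidate associated with $S_i$, together with the normalized score vector of length $m$ coming from the given scoring rule. The technical core is the construction, for each $i\in[t]$, of an isolating gadget $R_i$ built from the profiles supplied by \Cref{lem:scoringrule}: I would take $R_i$ to consist of $t$ copies of $\PP_{b_i}^a$ together with one copy of $\PP_a^{b_j}$ for every $j\ne i$. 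A direct bookkeeping calculation using the scores of $\PP_x^y$ shows that $R_i$ contributes exactly $t+1$ to the margin $s(a)-s(b_i)$ and exactly $0$ to every other margin $s(a)-s(b_k)$ with $k\ne i$: the $+t$ accumulated on each such margin by the copies of $\PP_{b_i}^a$ cancels precisely against the $-t$ produced by the $\PP_a^{b_j}$ summands. For each $u\in\UU$, I then let $G_u$ be the disjoint union of one copy of $R_i$ for every $i$ with $u\in S_i$, with no additional base group. The initial margin $s(a)-s(b_i)$ then equals $(t+1)p$ for every $i$, so $a$ is the unique initial winner.

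Setting $k_d=k$ and $k_a=p$, the key dichotomy is that an attack $T$ on undefended voter groups changes the winner if and only if $T\supseteq\{G_u:u\in S_i\}$ for some $i\in[t]$: in that case $s(a)-s(b_i)$ drops from $(t+1)p$ exactly to $0$ and $b_i$ becomes a co-winner, while otherwise every margin $s(a)-s(b_k)$ remains strictly positive. The equivalence with \HS then falls out cleanly. A hitting set $W$ of size $k$ yields the defense $\II=\{G_u:u\in W\}$, which meets every $S_i$ and so blocks every such attack. Conversely, if some $S_i$ were disjoint from $\{u:G_u\in\II\}$, the attacker could choose $T=\{G_u:u\in S_i\}$, an attack of size $p=k_a$ that succeeds by the dichotomy and contradicts correctness of $\II$.

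The main obstacle is the design of the isolating gadget $R_i$: because each $\PP_x^y$ profile simultaneously shifts every pairwise margin involving $x$ or $y$, a naive choice produces unwanted interference across the $t$ distinct comparisons $s(a)-s(b_k)$. Finding an integer combination in which these interferences cancel exactly, uniformly across all scoring rules via the normalization in \Cref{lem:scoringrule}, is the crux of the technical work.
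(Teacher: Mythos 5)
Your reduction is correct, and it follows the same overall skeleton as the paper's proof: a parameter-preserving reduction from \HS parameterized by the solution size, with one voter group per universe element, one ``threat'' candidate per set, and the invariant that the winner's margin over the threat candidate for $S_i$ can only be protected by defending a group corresponding to an element hitting $S_i$. Where you genuinely diverge is in the gadget that isolates the margins. The paper sidesteps cross-margin interference by adding a dummy candidate $d$ and building every group exclusively from profiles of the form $\PP_{x_j}^d$ and $\PP_d^{\cdot}$ (so each block only moves the margin of one threat candidate against $y$), together with one large balancing group $\hat{\GG}$ that the defender is forced to protect -- hence $k_d=k+1$ and $k_a=n$ there. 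You instead achieve the isolation arithmetically: your gadget $R_i$, combining $t$ copies of $\PP_{b_i}^a$ with one $\PP_a^{b_j}$ for each $j\ne i$, indeed contributes $t+1$ to $s(a)-s(b_i)$ and exactly $0$ to every other margin (the computation checks out against \Cref{lem:scoringrule}), which lets you drop both the dummy candidate and the balancing group, keep $k_d=k$, and use padding to uniform set size $p$ with $k_a=p$ rather than letting the attacker delete everything. Both are valid polynomial parameter transformations; yours is slightly leaner (one fewer candidate, no forced defense of a special group), while the paper's dummy-candidate template has the advantage of transferring almost verbatim to the Condorcet-rule analogue. One small point you should make explicit: assume $S_i\ne\emptyset$ for every $i$ (as the paper does, without loss of generality since otherwise \HS is a trivial \NO instance). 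Your padding with fresh dummy elements would otherwise turn an empty set into a hittable one and break the equivalence, and without padding an empty $S_i$ would leave $s(a)-s(b_i)=0$, spoiling the claim that $a$ is initially the unique winner.
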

\longversion{
\begin{proof}
 Let $(\UU, \SS = \{S_j: j\in[t]\}, k)$ be an arbitrary instance of \HS. Let $\UU=\{z_i:i\in[n]\}$. Without loss of generality, we assume that $S_j \ne \emptyset$ for every $j\in[t]$ since otherwise the instance is a \NO instance. Let $\overrightarrow{\alpha}$ be a normalized score vector of length $t+2$. We construct the following instance of the \OD problem for the scoring rule based on $\overrightarrow{\alpha}$. The set of candidates $\CC = \{x_j: j\in[t]\} \cup \{y,d\}$. We have the following voter groups.

 \begin{itemize}[topsep=0pt,itemsep=0pt]
  \item For every $i\in[n]$, we have a voter group $\GG_i$. For every $j\in[t]$ with $z_i\in S_j$ we have $2$ copies of $\PP_{x_j}^d$ in $\GG_i$.
  \item We have one group $\hat{\GG}$ where we have $2tn$ copies of $\PP_d^{x_j}$ for every $j\in[n]$ and $2tn-1$ copies of $\PP_d^y$.
 \end{itemize}

  Let \QQ be the resulting profile; that is $\QQ=\cup_{i=1}^n \GG_i\cup \hat{\GG}$. We define the defender's resource $k_d$ to be $k+1$ and attacker's resource to be $n$. This finishes the description of the \OD instance. Since $S_j\ne \emptyset$ for every $j\in[t]$, we have $s_\QQ(y)>s_\QQ(x_j)$ for every $j\in[t]$. We also have $s_\QQ(y)>s_\QQ(d)$. Hence the candidate $y$ is the unique winner of the profile \QQ. We now prove that the \OD instance $(\CC, \QQ, k_a, k_d)$ is equivalent to the \HS instance $(\UU, \SS, k)$.

  In the forward direction, let us suppose that the \HS instance is a \YES instance. Let $\II\subset[n]$ be such that $|\II|=k$ and $\{z_i:i\in\II\} \cap S_j \ne \emptyset$. We claim that the defender's strategy of defending the voter groups $\GG_j$ for every $j\in[t]\setminus\II$ and $\hat{\GG}$ results in a successful defense. Let \HH be the profile of voter groups corresponding to the index set \II; that is, $\HH = \cup_{i\in\II} \GG_i$. Let $\HH^\pr$ be the profile remaining after the attacker attacks some voter groups. We thus obviously have $\HH\cup\hat{\GG}\subseteq\HH^\pr$. Since $\{z_i:i\in\II\}$ forms a hitting set, we have $s_{\HH^\pr}(y)>s_{\HH^\pr}(x_j)$ for every $j\in[t]$. Also since the voter group $\hat{\GG}$ is defended, we have $s_{\HH^\pr}(y)>s_{\HH^\pr}(d)$. Hence the candidate $y$ continues to win uniquely even after the attack and hence the \OD instance is a \YES instance.

  In the other direction, let the \OD instance be a \YES instance. Without loss of generality, we can assume that the defender defends the voter group $\hat{\GG}$ since otherwise the attacker can attack the voter group $\hat{\GG}$ which makes the score of the candidate $d$ more than the score of the candidate $y$ and thus defense would fail. We can also assume, without loss of generality, that the defender defends exactly $k$ voter groups. Let $\II\subset[n]$ with $|\II|=k$ such that defending all the voter groups $\GG_i, i\in\II$ and $\hat{\GG}$ is a successful defense strategy. Let us consider $\ZZ=\{z_i:i\in\II\}\subseteq\UU$. We claim that $\ZZ$ must form a hitting set. Indeed, otherwise let us assume that there exists a $j\in[t]$ such that $\ZZ\cap S_j = \emptyset$. Consider the situation where the attacker attacks voter groups $\GG_i$ for every $i\in[n]\setminus\II$. We observe that $s_{\cup_{i\in\II}\GG_i \cup \hat{\GG}}(x_j) > s_{\cup_{i\in\II}\GG_i \cup \hat{\GG}}(y)$. This contradicts our assumption that defending all the voter groups $\GG_i, i\in\II$ and $\hat{\GG}$ is a successful defense strategy. Hence \ZZ forms a hitting set and thus the \HS instance is a \YES instance.
\end{proof}
}

In the proof of \Cref{thm:hard_td}, we observe that the reduced instance of \OD viewed as an instance of the \OA problem is a \NO instance if and only if the \kSum instance is a \YES instance. Hence, the same reduction as in the proof of \Cref{thm:hard_td} gives us the following result for the \OA problem.

\begin{corollary}
 The \OA problem for every scoring rule is \WTH parameterized by $k_d$.\label{thm:hard_td_oa}
\end{corollary}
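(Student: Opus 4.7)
The plan is to reuse the construction from the proof of \Cref{thm:hard_td} verbatim: given an \HS instance $(\UU, \SS, k)$, build the candidates $\CC = \{x_j : j \in [t]\} \cup \{y, d\}$, the voter groups $\{\GG_i : i \in [n]\} \cup \{\hat{\GG}\}$, and set $k_a = n$, $k_d = k + 1$. I would then argue that the resulting instance, viewed as an \OA instance, is NO iff the source \HS instance is YES; combined with the \WT-hardness of \HS parameterized by solution size, this yields the claim.

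The key structural fact extracted from \Cref{thm:hard_td} is the following: for any $\II \subseteq [n]$, the profile $\bigcup_{i \in \II} \GG_i \cup \hat{\GG}$ has $y$ as its unique winner iff $\{z_i : i \in \II\}$ is a hitting set of $\SS$. This follows from the same score-accounting used in the proof of \Cref{thm:hard_td}: each $\GG_i$ narrows the lead of $y$ over $x_j$ only when $z_i \in S_j$, while $\hat{\GG}$ gives $y$ a controlled lead over $d$ and over each $x_j$.

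For \OA, I would first show the attacker's optimal choice is $A = [n]$. Any attack containing $\hat{\GG}$ is no better, because the defender must include $\hat{\GG}$ in $D$ (otherwise $d$ becomes the winner), so one defense slot is consumed saving $\hat{\GG}$ that could have been used to protect a $\GG_i$; a case comparison shows the set of successful defenses against such an attack strictly shrinks compared to $A = [n]$. Given $A = [n]$, the defender has no reason to include $\hat{\GG}$ in $D$ and therefore allocates all $k+1$ slots among indices in $[n]$, yielding a residual profile of the form $\{\GG_i : i \in \II'\} \cup \{\hat{\GG}\}$ for some $\II' \subseteq [n]$ with $|\II'| \le k + 1$. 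By the structural fact, the defender succeeds iff $\{z_i : i \in \II'\}$ hits $\SS$.

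Combining, the constructed instance is a NO instance of \OA iff a hitting set of size at most $k+1$ exists in $\SS$. One direction of the equivalence with \HS-YES is immediate: any hitting set of size $k$ extends to one of size $k+1$. The reverse direction is handled by first preprocessing the \HS instance: adjoin a fresh element $u_0$ to $\UU$ and a new singleton set $\{u_0\}$ to $\SS$, forcing $u_0$ into every hitting set and effectively reducing the free budget back to $k$. This preprocessing is trivially parameter-preserving and does not perturb the score-based invariants of the construction. The main obstacle will be verifying cleanly that the attacker's dominance argument for $A = [n]$ is watertight for an arbitrary normalized scoring vector; the remaining steps are direct consequences of the scoring computations already present in \Cref{thm:hard_td}.
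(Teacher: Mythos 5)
Your high-level plan is the same as the paper's (reuse the \Cref{thm:hard_td} construction and argue that, read as an \OA instance, it is a \NO instance exactly when the \HS instance is a \YES instance), but the step where you dispose of attacks containing $\hat{\GG}$ does not hold, and with it your characterization collapses. Since $k_a=n$ while there are $n+1$ groups, an attack containing $\hat{\GG}$ must leave some $\GG_{i_0}$ untouched, chosen by the attacker. Against such an attack the defender is forced to spend one slot on $\hat{\GG}$ and inherits $\GG_{i_0}$ for free, so the condition for a successful defense becomes ``some hitting set of size at most $k+1$ contains $z_{i_0}$ (or one of size at most $k$ exists)'', not ``some hitting set of size at most $k+1$ exists''. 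These attacks are neither dominated by $A=[n]$ nor dominating it (your own wording is symptomatic: if the set of successful defenses ``strictly shrinks'', the attack is better for the attacker, not ``no better''), and they break the equivalence you claim. Concretely, take $\UU=\{z_1,z_2,z_3,z_4\}$, $\SS=\{\{z_1,z_2\},\{z_3,z_4\}\}$, $k=1$, so $k_d=2$, $k_a=4$: there is no hitting set of size $1$, yet every attack is defendable --- if $\hat{\GG}$ is untouched defend $\GG_1,\GG_3$; if $\hat{\GG}$ is attacked, one $\GG_{i_0}$ survives and the defender keeps $\hat{\GG}$ plus one group whose element covers the other set, leaving $y$ the unique winner. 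So the constructed \OA instance is \NO although \HS is \NO.

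The $u_0$-preprocessing does not repair this; it breaks the other direction. Take $\UU=\{z_1,z_2\}$, $\SS=\{\{z_1\}\}$, $k=1$, a \YES instance of \HS. After adjoining $u_0$ and $\{u_0\}$ we have $n=3$, $k_d=2$, $k_a=3$; the attack on $\hat{\GG}$, $\GG_{z_1}$ and $\GG_{u_0}$ leaves only the useless $\GG_{z_2}$ free, the defender must keep $\hat{\GG}$ and has a single further slot, which cannot cover both $\{z_1\}$ and $\{u_0\}$, so this \OA instance is \YES although \HS is \YES. What the construction with $k_a=n$ really encodes is ``every element of $\UU$ lies in some hitting set of size at most $k_d$'', which is a different problem from \HS, so you cannot simply quote the score accounting of \Cref{thm:hard_td}; you would have to change the budgets or the gadget so that attacks touching $\hat{\GG}$ are provably useless for the attacker, and re-verify both directions. (A separate point worth a sentence in any write-up: mapping \YES instances of \HS to \NO instances of \OA argues hardness of the complement of \OA; the paper takes the same licence, but it should be made explicit.)
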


\longversion{
We now show \WT-hardness of the \OD problem for the Condorcet voting rule parameterized by $k_d$. Towards that, we need the following lemma which has been used before
\cite{mcgarvey1953theorem,xia2008determining}.
\begin{lemma}\label{thm:mcgarvey}
 For any function $f:\mathcal{C} \times \mathcal{C} \longrightarrow \mathbb{Z}$, such that
 \begin{enumerate}[topsep=0pt,itemsep=0pt]
  \item $\forall a,b \in \mathcal{C}, f(a,b) = -f(b,a)$.
  \item $\forall a,b, c, d \in \mathcal{C}, f(a,b) + f(c,d)$ is even,
 \end{enumerate}
 there exists a $n$ voters' profile such that for all $a,b \in \mathcal{C}$, $a$ defeats
 $b$ with a margin of $f(a,b)$. Moreover,
 $$n \text{ is even and } n = O\left(\sum_{\{a,b\}\in \mathcal{C}\times\mathcal{C}} |f(a,b)|\right)$$
\end{lemma}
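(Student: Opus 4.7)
The plan is to prove McGarvey's theorem by the classical ``two-vote gadget'' construction. First I would note that antisymmetry $f(a,b)=-f(b,a)$ forces $f(a,a)=0$ for every candidate $a$, and then the parity hypothesis applied with $c=d=a$ gives that $f(a,b)+f(a,a)=f(a,b)$ is even for every pair $(a,b)$. Thus every target margin is an even integer, which is necessary if the whole profile is going to contain an even number of voters.

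Next, for each ordered pair of distinct candidates $(a,b)$ with $f(a,b)>0$, I would construct a \emph{gadget} consisting of exactly two votes whose combined effect on the weighted majority graph is to increase $\DD(a,b)$ by $2$ and leave all other pairwise margins unchanged. Writing $\CC\setminus\{a,b\}=\{c_1,c_2,\dots,c_{m-2}\}$, the two votes are
\begin{align*}
\pi_1:\quad & a\succ b\succ c_1\succ c_2\succ\cdots\succ c_{m-2},\\
\pi_2:\quad & c_{m-2}\succ c_{m-3}\succ\cdots\succ c_1\succ a\succ b.
\end{align*}
In $\pi_1$ the candidate $a$ beats $b$ and both beat every $c_i$; in $\pi_2$, $a$ still beats $b$ but every $c_i$ beats both $a$ and $b$. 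Hence the net contribution to $\DD(a,b)$ is $+2$, the net contribution to $\DD(a,c_i)$ and $\DD(b,c_i)$ is $0$ for all $i$, and between any two $c_i,c_j$ the orderings in $\pi_1$ and $\pi_2$ are exact reverses, giving net $0$ there as well.

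Finally I would assemble the profile by taking, for each ordered pair $(a,b)$ with $f(a,b)>0$, exactly $f(a,b)/2$ independent copies of this gadget (well-defined because $f(a,b)$ is even and positive). By linearity of the margin function, the resulting profile realizes exactly the target margin $f(a,b)$ on every ordered pair. The total number of voters is
\[
n \;=\; \sum_{(a,b):\,f(a,b)>0} 2\cdot\frac{f(a,b)}{2}
\;=\; \sum_{\{a,b\}\subseteq\CC} |f(a,b)|,
\]
which is a sum of even numbers (hence even) and clearly $O\!\left(\sum_{\{a,b\}}|f(a,b)|\right)$, as required.

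The proof is entirely constructive and has no real obstacles; the only subtle point is the parity bookkeeping at the start, which is precisely what justifies halving $f(a,b)$ and what forces the hypothesis $f(a,b)+f(c,d)\in 2\mathbb{Z}$ to be imposed on $f$.
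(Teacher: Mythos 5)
Your proof is correct, and it is precisely the classical McGarvey construction (the two-vote gadget that adds $+2$ to one pairwise margin and cancels everywhere else); the paper itself gives no proof of this lemma but simply cites McGarvey (1953) and Xia--Conitzer, so your argument coincides with the standard proof being invoked. The parity bookkeeping at the start, deriving $f(a,a)=0$ and hence that every margin is even before halving, is handled correctly as well.
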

}
}

\shortversion{
Next, we show the \WT-hardness of the \OD and \OA problems for the Condorcet voting rule parameterized by $k_d$. This is also a parameter-preserving reduction from the \HS{} problem.

\begin{theorem}\shortversion{[$\star$]}
 The \OD and \OA problems for the Condorcet voting rule is \WTH parameterized by $k_d$.\label{thm:hard_td_condorcet_od_oa}
\end{theorem}
}

\longversion{
Next, we show the \WT-hardness of the \OD problem for the Condorcet voting rule parameterized by $k_d$. This is also a parameter-preserving reduction from the \HS{} problem.

\begin{theorem}\shortversion{[$\star$]}
 The \OD problem for the Condorcet voting rule is \WTH parameterized by $k_d$.\label{thm:hard_td_condorcet}
\end{theorem}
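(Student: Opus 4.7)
The plan is a polynomial parameter transformation from \HS{} parameterized by solution size, which is \WTC{}, to the \OD{} problem for the Condorcet voting rule parameterized by $k_d$. The reduction builds voter groups via McGarvey's theorem (\Cref{thm:mcgarvey}) with prescribed pairwise margins, so that a defence is winning if and only if the corresponding set of universe elements hits every set in the input.

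Given an instance $(\UU = \{z_i : i \in [n]\}, \SS = \{S_j : j \in [t]\}, k)$ of \HS{}, I would introduce $t + 1$ candidates $\CC = \{a\} \cup \{x_j : j \in [t]\}$. For each $i \in [n]$, I would build a voter group $\GG_i$ whose only nonzero pairwise margins are $\DD_{\GG_i}(a, x_j) = 4$ for every $j$ with $z_i \in S_j$. I would additionally build one voter group $\hat{\GG}$ whose only nonzero margins are $\DD_{\hat{\GG}}(a, x_j) = -2$ for every $j \in [t]$. Since all prescribed values are even, the parity hypothesis of \Cref{thm:mcgarvey} is met and both groups are realisable by polynomially many votes. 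I would set $k_d := k$ and $k_a := n$. In the combined profile $\QQ$ we have $\DD_\QQ(a, x_j) = -2 + 4 |\{i : z_i \in S_j\}| \ge 2$ because $S_j \ne \emptyset$, so $a$ is the unique Condorcet winner of $\QQ$.

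For correctness, after any attack $A$ (disjoint from the defence $D$ and of size at most $n$), the margin of $a$ over $x_j$ in the remainder equals $-2 + 4|\{i \notin A : z_i \in S_j\}|$ if $\hat{\GG} \notin A$ and $4|\{i \notin A : z_i \in S_j\}|$ if $\hat{\GG} \in A$. In either case, $a$ strictly defeats $x_j$ precisely when at least one unattacked $\GG_i$ satisfies $z_i \in S_j$; otherwise $x_j$ either beats $a$ in pairwise election (margin $-2$) or ties with $a$ (margin $0$), and in both of these situations $a$ is no longer the unique Condorcet winner. Therefore $D$ is a winning defence if and only if $\{z_i : \GG_i \in D\}$ is a hitting set of $\SS$ (padding with an arbitrary extra element if $\hat{\GG} \in D$, giving a hitting set of size at most $k$). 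Both directions of the equivalence follow, and since $k_d = k$ the transformation is parameter preserving.

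The main obstacle is to rule out creative attacker behaviour, in particular whether attacking $\hat{\GG}$ together with a carefully chosen collection of $\GG_i$'s might give the attacker an extra degree of freedom that breaks the correspondence with hitting sets; the case split on $\hat{\GG} \in A$ versus $\hat{\GG} \notin A$ resolves this because both cases yield the same combinatorial condition on $D$. A secondary technical nuisance is meeting the parity hypothesis of \Cref{thm:mcgarvey}, which is the reason for choosing margins $4$ and $-2$ rather than the more natural $2$ and $-1$: this keeps every prescribed margin even while still making $a$ a strict (rather than merely tied) Condorcet winner of $\QQ$.
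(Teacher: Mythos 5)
Your proposal is correct and is essentially the paper's own argument: a parameter-preserving reduction from \HS{} with one voter group per universe element, one candidate per set plus a distinguished candidate, margins created via pairwise-majority prescriptions (McGarvey), $k_d=k$ and $k_a=n$, so that a defence succeeds iff the defended elements form a hitting set. The only deviation is your auxiliary group $\hat{\GG}$ with negative margins (and the $4/-2$ scaling), which the paper's proof shows to be unnecessary: since the Condorcet rule already outputs all candidates when the distinguished candidate merely ties some $x_j$, margin $2$ per membership with no extra group suffices.
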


\longversion{\begin{proof}
 Let $(\UU, \SS = \{S_j: j\in[t]\}, k)$ be an arbitrary instance of \HS. Let $\UU=\{z_i:i\in[n]\}$. Without loss of generality, we assume that $S_j \ne \emptyset$ for every $j\in[t]$ since otherwise the instance is a \NO instance. We construct the following instance of the \OD problem for the Condorcet voting rule. The set of candidates $\CC = \{x_j: j\in[t]\} \cup \{y\}$. For every $i\in[n]$, we have a voter group $\GG_i$. For every $j\in[t]$ with $z_i\in S_j$ we have $\DD_{\GG_i}(y,x_j)=2$. Let \QQ be the resulting profile; that is $\QQ=\cup_{i=1}^n \GG_i$. We define the defender's resource $k_d$ to be $k$ and attacker's resource to be $n$. This finishes the description of the \OD instance. Since $S_j\ne \emptyset$ for every $j\in[t]$, we have $\DD_{\QQ}(y,x_j)\ge 2$ for every $j\in[t]$. Hence the candidate $y$ is the Condorcet winner of the profile \QQ. \shortversion{The proof of equivalence of the two instances is similar in spirit as the proof of \Cref{thm:hard_td}.}\longversion{We now prove that the \OD instance $(\CC, \QQ, k_a, k_d)$ is equivalent to the \HS instance $(\UU, \SS, k)$.

 In the forward direction, let us suppose that the \HS instance is a \YES instance. Let $\II\subset[n]$ be such that $|\II|=k$ and $\{z_i:i\in\II\} \cap S_j \ne \emptyset$. We claim that the defender's strategy of defending the voter groups $\GG_j$ for every $j\in[t]\setminus\II$ results in a successful defense. Let \HH be the profile of voter groups corresponding to the index set \II; that is, $\HH = \cup_{i\in\II} \GG_i$. Let $\HH^\pr$ be the profile remaining after the attacker attacks some voter groups. We thus obviously have $\HH\subseteq\HH^\pr$. Since $\{z_i:i\in\II\}$ forms a hitting set, we have $\DD_{\HH^\pr}(y,x_j)\ge 2$ for every $j\in[t]$. Hence the candidate $y$ continues to win uniquely even after the attack and hence the \OD instance is a \YES instance.

 In the other direction, let the \OD instance be a \YES instance. We can also assume, without loss of generality, that the defender defends exactly $k$ voter groups. Let $\II\subset[n]$ with $|\II|=k$ such that defending all the voter groups $\GG_i, i\in\II$ is a successful defense strategy. Let us consider $\ZZ=\{z_i:i\in\II\}\subseteq\UU$. We claim that $\ZZ$ must form a hitting set. Indeed, otherwise let us assume that there exists a $j\in[t]$ such that $\ZZ\cap S_j = \emptyset$. Consider the situation where the attacker attacks voter groups $\GG_i$ for every $i\in[n]\setminus\II$. We observe that $\DD_{\cup_{i\in\II}\GG_i}(y,x_j)=0$ and hence the candidate $y$ is not the Condorcet winner. This contradicts our assumption that defending all the voter groups $\GG_i, i\in\II$ is a successful defense strategy. Hence \ZZ forms a hitting set and thus the \HS instance is a \YES instance.}
\end{proof}}

In the proof of \Cref{thm:hard_td_condorcet}, we observe that the reduced instance of \OD viewed as an instance of the \OA problem is a \NO instance if and only if the \kSum instance is a \YES instance. Hence, the same reduction as in the proof of \Cref{thm:hard_td_condorcet} gives us the following result for the \OA problem.

\begin{corollary}
 The \OA problem for the Condorcet voting rule is \WTH parameterized by $k_d$.\label{cor:hard_ta_condorcet}
\end{corollary}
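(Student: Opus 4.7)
The plan is to reuse the reduction from \HS constructed inside \Cref{thm:hard_td_condorcet} and interpret the resulting instance through the \OA lens. That construction produces candidates $\{x_1, \ldots, x_t, y\}$, a voter group $\GG_i$ for each $z_i \in \UU$ contributing $\DD_{\GG_i}(y, x_j) = 2$ whenever $z_i \in S_j$, and parameter settings $k_d = k$ and $k_a = n$. Since $k_d = k$ matches the \HS parameter, this is already a polynomial-parameter transformation in the parameter of interest.

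The key observation is that in the constructed instance the attacker is allowed to attack every voter group (as $k_a = n$), and doing so is without loss of generality: any group the attacker declines to attack stays in the profile for free, which only helps the defender and strictly widens the set of effective defender responses. Hence the attacker's strongest move is to attack the entire input $\II = [n]$, and against this choice a defender's response $\II' \subseteq [n]$ with $|\II'| \le k_d = k$ leaves precisely the subprofile indexed by $\II'$. By the analysis already carried out inside \Cref{thm:hard_td_condorcet}, $y$ remains the Condorcet winner of this subprofile iff $\{z_i : i \in \II'\}$ hits every $S_j$.

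Chaining these, the constructed instance is a \NO instance of \OA iff there is some $\II' \subseteq [n]$ with $|\II'| \le k$ against which the full attack fails, iff $\{z_i : i \in \II'\}$ is a hitting set of $\SS$ of size at most $k$, iff the original \HS instance is a \YES instance. Taking contrapositives, the \OA instance is \YES iff the \HS instance is \NO; since \HS parameterized by the solution size is \WTC, this polynomial-parameter reduction transfers the claimed parameterized hardness to \OA parameterized by $k_d$.

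The only subtle step is justifying the ``attacker attacks everything is optimal'' reduction, which I expect to be the main obstacle beyond the work already done for \Cref{thm:hard_td_condorcet}: one must verify that for any attack $\II \subsetneq [n]$ the effective surviving profile $([n]\setminus\II) \cup (\II\cap\II')$ can always be reproduced under the full attack $[n]$ using a defender response of no larger size, so shrinking the attack can only hurt the attacker. Once this monotonicity is pinned down, the remainder of the argument is a direct transfer of the Hitting Set correspondence already established inside \Cref{thm:hard_td_condorcet}, with the parameter $k_d = k$ preserved throughout.
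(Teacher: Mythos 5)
Your proposal is essentially the paper's own argument: the paper proves this corollary by reusing the reduction from \Cref{thm:hard_td_condorcet} and observing that the reduced instance, viewed as an \OA instance, is a \NO instance if and only if the \HS instance is a \YES instance --- exactly the equivalence you derive, with your monotonicity step (attacking all groups is without loss of generality optimal, since any defense that survives the full attack survives every attack) simply making explicit what the paper leaves implicit. Even the final step is shared: both you and the paper infer the stated hardness from a map sending \NO instances of \HS to \YES instances of \OA, i.e.\ a reduction from the complement of \HS, so your write-up matches the paper's proof in substance and in its one delicate inference.
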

}

We now show that the \OD problem for scoring rules is \WTH parameterized by $k_a$ also by exhibiting a parameter preserving reduction from a problem closely related to \HS{}, which is \SC problem parameterized by the solution size.\longversion{ The \SC problem is defined as follows.} This is a \WTC{} problem~\cite{downey1999parameterized}. We now \longversion{present}\shortversion{state} our \WT-hardness proof for the \OD problem for scoring rules\shortversion{ and the Condorcet voting rule,} parameterized by $k_a$, by a reduction from \SC{}.

\longversion{
\begin{definition}[\SC]
 Given an universe \UU, a set $\SS = \{S_i: i\in[t]\}$ of subsets of \UU, and a non-negative integer $k$ which is at most $t$, does there exists an index set $\II\subset[t]$ with $|\II|=k$ such that $\bigcup_{i\in\II} S_i = \UU$. We denote an arbitrary instance of \SC by $(\UU, \SS, k)$.
\end{definition}}

\begin{theorem}\shortversion{[$\star$]}\label{thm:hard_ta_sc_con}
 The \OD problem for every scoring rule and Condorcet rule is \WTH parameterized by $k_a$.
\end{theorem}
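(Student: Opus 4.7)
The plan is to show \WT-hardness by giving a parameter-preserving reduction from \SC parameterized by the solution size $k$, which is \WTC. Given an instance $(\UU, \SS, k)$ with $\UU = \{u_1, \ldots, u_n\}$ and $\SS = \{S_1, \ldots, S_t\}$, I will construct an equivalent \OD instance in which $k_a = k$ matches the \SC parameter and $k_d$ is a polynomial function of $n$ and $t$. Morally, this will be a ``role-reversed'' variant of the reduction used in \Cref{thm:hard_td}: the combinatorial choice of $k$ sets from $\SS$ will now be encoded in the attacker's $k_a$-sized attack rather than in the defender's choice of groups to protect, while the defender's job will be to block every such attack by protecting a polynomial number of ``critical'' groups.

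For the scoring-rule case, I plan to use a candidate set containing a designated initial winner $w$, a backup candidate $d$, and a threat candidate $x_i$ for each element $u_i \in \UU$. The voter groups will consist of one set-group $G_j$ per $S_j$, together with a constant number of safety groups used to calibrate initial margins so that $w$ is the unique winner. Each $G_j$'s profile will be built from $\PP_x^y$ gadgets as guaranteed by \Cref{lem:scoringrule}, so that the contribution of $G_j$ to the gap between $w$ and each $x_i$ depends precisely on whether $u_i \in S_j$. The safety groups will be designed so that attacking them is strictly worse for the attacker than attacking a set-group, effectively forcing the attacker to spend all $k_a = k$ units of budget on set-groups. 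For the Condorcet rule the construction is entirely parallel, with pairwise margins specified directly via $\DD(\cdot,\cdot)$ from each group; the winner-change event becomes $w$ losing Condorcet-winner status to some $x_i$.

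The forward direction will then show that a size-$k$ cover of $\UU$ yields a defender strategy (protecting groups derived from the cover, plus the safety groups) that blocks every $k_a$-attack, since for each element $u_i$ at least one ``defending'' set-group in the cover survives any attack and keeps $x_i$ trailing $w$. The backward direction will argue the contrapositive: if no $k$-cover exists, then for any $k_d$-sized defense there remains some $k$-subset of unprotected set-groups whose removal exploits an uncovered element to push some $x_i$ past $w$ (or to violate $w$'s Condorcet status). The main obstacle is the logical mismatch between the set-cover condition (a conjunction over the $n$ elemental coverage requirements) and winner-change (a disjunction over per-candidate catch-up events): encoding a conjunctive global property through the naturally disjunctive ``some $x_i$ overtakes $w$'' event will require carefully calibrated per-element initial margins, so that the collective score shifts caused by a $k$-attack translate into an overtake if and only if the chosen $k$ set-groups correspond to a genuine cover. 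Designing the safety groups so that no cheap attack (not encoding a genuine cover) can succeed, while remaining compatible with both the $k_a$ budget and the polynomial $k_d$ budget, is where the bulk of the technical care will lie.
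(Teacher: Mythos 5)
Your overall frame (reduce from \SC with $k_a=k$, one threat candidate $x_i$ per universe element, one voter group per set, plus calibration/safety groups) matches the paper's, but the way you encode the cover breaks the equivalence. In your sketch the defender's winning strategy is to protect the cover itself, each protected covering group single-handedly keeping its element's candidate $x_i$ behind $w$. Two things go wrong. First, $k_d$ must indeed be polynomial in $n$ and $t$ as you say (it cannot be a function of $k$ alone, or the reduction would make \OD \WTH for the combined parameter $(k_a,k_d)$, clashing with the algorithm of \Cref{thm:fpt_ka_m} unless $\FPT=\WT$); but then the backward direction fails: if the \SC instance has no cover of size $k$ yet has one of size $k+1\le k_d$ --- which is typical --- the defender simply protects that larger cover and, by your own mechanism, blocks every attack, so a \NO instance of \SC maps to a \YES instance of \OD. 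Your construction really tests ``does some family of size $\le k_d$ cover $\UU$'', not ``does some family of size $k$ cover $\UU$''. Second, the calibration you hope for is arithmetically impossible for elements lying in more than $k$ sets: for the \YES direction one surviving covering group must outweigh the boost obtainable by deleting up to $\min(k,f_i-1)$ groups containing $u_i$, while for the \NO direction deleting $\min(k,f_i)$ such groups must flip $x_i$; these thresholds can be separated only when $f_i\le k$, which is not without loss of generality for \SC.

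The paper avoids both problems by encoding the cover in the \emph{complement} of the defense rather than in the defense (\Cref{thm:hard_ta}): it sets $k_d=t-k$, puts a deficit of $2$ on $x_i$ (via copies of $\PP_{x_i}^d$) in every group $\GG_j$ with $z_i\notin S_j$, and uses the calibration group to make $y$ lead every $x_i$ by exactly $2k-1$. Attacking the $k$ unprotected set-groups then boosts $x_i$ by $2\cdot|\{j \text{ attacked}: z_i\notin S_j\}|$, which stays at most $2(k-1)$ for every $i$ precisely when the attacked groups cover $\UU$, and reaches $2k$ for some $i$ otherwise; so the defense succeeds iff the $k$ groups left unprotected form a cover. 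This is also how the conjunction/disjunction mismatch you flag is resolved: covering uniformly caps every disjunctive overtake event below a single threshold, with no per-element margins needed. The Condorcet case (\Cref{thm:hard_ta_condorcet}) is the same construction with pairwise margins $\DD(\cdot,\cdot)$ in place of scores. As written, your proposal has a genuine gap and would need to be restructured along these lines.
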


\longversion{
\begin{theorem}\shortversion{[$\star$]}\label{thm:hard_ta}
 The \OD problem for every scoring rule is \WTH parameterized by $k_a$.
\end{theorem}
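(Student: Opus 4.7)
The plan is to give a polynomial parameter transformation from \SC, which is \WTC parameterized by the solution size $k$. Given an instance $(\UU, \SS, k)$ with $\UU = \{z_1, \ldots, z_n\}$ and $\SS = \{S_1, \ldots, S_t\}$, I will construct an \OD instance for the scoring rule associated with any normalized score vector $\overrightarrow{\alpha}$. The set of candidates is $\CC = \{y, d\} \cup \{c_i : z_i \in \UU\}$, padded so that $\overrightarrow{\alpha}$ has the right length. Mirroring the reduction in the proof of \Cref{thm:hard_td} but with the roles of ``element'' and ``set'' interchanged, for each set $S_j$ I introduce a voter group $\GG_j$ containing, for every $z_i \in S_j$, two copies of the profile $\PP_{c_i}^d$ from \Cref{lem:scoringrule}. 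A single calibration group $\hat{\GG}$, built using the same $2tn$-padding recipe as in \Cref{thm:hard_td}, ensures that $y$ is the unique initial winner with the score gap $s(y) - s(c_i)$ equal to exactly $2m_i - 1$, where $m_i = |\{j : z_i \in S_j\}|$. I set the attacker's budget to $k_a = k$ and the defender's budget $k_d$ to a suitable polynomial of the input (large enough to let the defender protect $\hat{\GG}$ together with a cover-sized subset of $\SS$).

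The score computation carried out in analogy with the proof of \Cref{thm:hard_td} will then show that an attack $\II' \subseteq \SS$ of size at most $k_a$ changes the winner iff $\II' \supseteq Q_i$ for some $i$, where $Q_i = \{j : z_i \in S_j\}$; informally, the attacker wins exactly when it eliminates every $\GG_j$ supporting $y$ over some $c_i$. Consequently, the defender's task reduces to choosing $\II \subseteq [t]$ of size at most $k_d$ such that, for every $i$, the family $Q_i$ is not wholly contained in $[t] \setminus \II$; equivalently, $\II$ intersects every $Q_i$, which is precisely the condition that $\{S_j : j \in \II\}$ is a set cover of $\UU$. Therefore the \OD instance is a YES instance iff the original \SC instance admits a cover of size at most $k_d$, which gives the desired equivalence.

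The main obstacle will be the bookkeeping required to make $k_a$ the parameter of interest. The attack-characterization only witnesses an element $z_i$ when $m_i \le k_a$, so setting $k_a = k$ demands that the source \SC instance satisfy $\max_i m_i \le k$. This can be arranged by a preliminary padding step that replaces each high-frequency element by several independent copies, each coverable only by a $k$-bounded sub-family of $\SS$, without affecting the YES/NO status of the \SC question. Once this preprocessing is in place, the equivalence between \OD-feasibility and \SC-feasibility is parameter-preserving with $k_a = k$, which establishes that \OD parameterized by $k_a$ is \WTH for every scoring rule.
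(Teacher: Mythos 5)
There is a genuine gap, and it sits at the heart of your construction rather than in bookkeeping. You orient the incidence the wrong way: in your group $\GG_j$ the candidate $c_i$ is penalized exactly when $z_i\in S_j$, so after calibration the gap $s(y)-s(c_i)$ equals $2m_i-1$ and the attacker can flip the outcome via $c_i$ only by deleting \emph{all} $m_i$ groups of $Q_i=\{j: z_i\in S_j\}$. Consequently a successful defense merely has to intersect those $Q_i$ with $|Q_i|\le k_a$, and to read the original budget-$k$ \SC question out of this you are forced to take $k_d=k+1$ (your ``suitable polynomial'' choice of $k_d$ destroys the equivalence outright, since the instance then encodes ``is there a cover of size at most $k_d-1$?'' rather than the given budget-$k$ question) \emph{and} to restrict to instances with $\max_i m_i\le k$. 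But \SC in which every element has frequency at most $k$ is fixed-parameter tractable: in the dual it is Hitting Set with hyperedges $Q_i$ of size at most $k$, solvable by straightforward branching in time $\OO^*(k^k)$. So no padding can both preserve the YES/NO status and enforce the frequency bound unless \FPT{} $=$ \WT{}; concretely, splitting a high-frequency $z_i$ into copies each coverable only by a $k$-bounded subfamily of $Q_i$ changes the requirement from ``pick one set of $Q_i$'' to ``pick one set from each subfamily'', which is strictly stronger. A further sanity check: with $k_a=k$ and $k_d=k+1$ your reduction would bound the combined parameter $(k_a,k_d)$ by a function of $k$ and hence prove \WT-hardness for the combined parameter, contradicting the paper's $\OO^*(k_a^{k_d})$ algorithm (\Cref{thm:fpt_ka_m}).

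The paper's reduction avoids exactly this trap by complementing the incidence. There, candidates again correspond to elements, but in group $\GG_j$ the candidate $x_i$ receives two copies of $\PP_{x_i}^d$ precisely when $z_i\notin S_j$, and the large group \HH is calibrated (after padding so that $t-f_i-k>3k$) to make $s(y)-s(x_i)=2k-1$, a quantity depending only on the parameter, never on the frequencies. The budgets are $k_a=k$ and $k_d=t-k$: the defender's real decision is which $k$ groups to leave exposed, attacking those boosts $x_i$ by $2\,|\{j\in\II: z_i\notin S_j\}|$, and $y$ survives iff every element is covered by some exposed set. Thus the cover is encoded by the \emph{complement} of the defense, the defender's budget is allowed to be large (as it must be, given \Cref{thm:fpt_ka_m}), and only $k_a$ is tied to the \SC parameter --- which is exactly what \WT-hardness parameterized by $k_a$ alone requires. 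Your proposal as written does not establish the theorem.
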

\longversion{
\begin{proof}
 Let $(\UU, \SS = \{S_j: j\in[t]\}, k)$ be an arbitrary instance of \SC. Let $\UU=\{z_i:i\in[n]\}$. We assume that $k>3$ since otherwise the \SC instance is polynomial time solvable. For $i\in[n]$, let $f_i$ be the number of $j\in[t]$ such that $z_i\in S_j$; that is, $f_i = |\{j\in[t]: z_i\in S_j\}|$. We assume, without loss of generality, that for every $i\in[n]$, $t-f_i-k>3k$ by adding at most $9t$ empty sets in \SS. We construct the following instance of the \OD problem for the scoring rule induced by the score vector $\overrightarrow{\alpha}$ rule. The set of candidates $\CC = \{x_i: i\in[n]\} \cup \{y,d\}$. Let $\overrightarrow{\alpha}$ be any normalized score vector of length $n+2$. We have the following voter groups.

 \begin{itemize}[topsep=0pt,itemsep=0pt]
  \item For every $j\in[t]$, we have a voter group $\GG_j$. For every $i\in[n]$ and $j\in[t]$ with $z_i\notin S_j$, we have $2$ copies of $\PP_{x_i}^d$.
  \item We have another voter group \HH where, for every $i\in[n]$, we have $2tn+(2(t-f_i-k)+1)$ copies of $\PP_d^{x_i}$ and $2tn$ copies of $\PP_d^y$.
 \end{itemize}

 We define attacker resource $k_a$ to be $k$ and the defender's resource $k_d$ to be $t-k$. This finishes the description of the \OD instance. We first observe that the score of the candidate $d$ is strictly less than the score of every other candidate. We now observe that the candidate $y$ is the unique winner of the election since the score of the candidate $y$ is $2k-1$ more than the score of the candidate $x_i$ for every $i\in[n]$. We now prove that the \OD instance $(\CC, \cup_{j\in[t]}\GG_j\cup\HH, k_a, k_d)$ is equivalent to the \SC instance $(\UU, \SS, k)$.

 In the forward direction, let us suppose that the \SC instance is a \YES instance. Let $\II\subset[t]$ be such that $|\II|=k$ and $\bigcup_{j\in\II} S_j = \UU$. We claim that the defender's strategy of defending the voter groups $\GG_j$ for every $j\in[t]\setminus\II$ results in a successful defense. To see this, we first observe that, if the attacker attacks the voter group \HH, then the candidate $y$ continues to uniquely win the election irrespective of what other voter groups the attacker attacks. Indeed, since $t-f_i-k>3k$ for every $i\in[n]$, the score of the candidate $x_i$ is strictly less than the score of the candidate $y$ irrespective of what other voter groups the attacker attacks. Since, for every $i\in[n]$ and $j\in[t]$, the score of the candidate $x_i$ is not more than the score of the candidate $y$ in the voter group $\GG_j$, we may assume that the attacker attacks the voter group $\GG_j$ for every $j\in\II$ (since they are the only voter groups unprotected except \HH). Now, since $S_j, j\in\II$ forms a set cover of \UU, after deleting the voter groups $\GG_j, j\in\II$, the score of the candidate $x_i$ increases by at most $2(k-1)$ from the original election for every $i\in[n]$. Hence, after deleting the voter groups $\GG_j, j\in\II$, the score of the candidate $x_i$ is still strictly less than the score of the candidate $y$. Hence the candidate $y$ continues to win and thus the defense is successful. Hence the \OD instance is a \YES instance.

 In the other direction, let us suppose that the \OD instance is a \YES instance. We assume, without loss of generality, that the defender protects exactly $t-k$ voter groups. We argued in the forward direction that we can assume, without loss of generality, that the attacker never attacks the voter group \HH. Hence, we can also assume, without loss of generality, that the defender also does not defend the voter group \HH. Let $\II\subset[t]$ be such that $|\II|=k$ and the defender defends the voter group $\GG_j$ for every $j\in[t]\setminus\II$. We claim that the sets $S_j, j\in\II$ forms a set cover of \UU. Suppose not, then let $z_i$ be an element in \UU which is not covered by $S_j, j\in\II$. We observe that attacking the voter groups $\GG_j$ for every $j\in\II$ increases the score of the candidate $x_i$ by $2k$ which makes the candidate $y$ lose in the resulting election (after deleting the voter groups $\GG_j$ for every $j\in\II$) since the score of $x_i$ is strictly more than the score of $y$. This contradicts our assumption that defending the voter group $\GG_j$ for every $j\in[t]\setminus\II$ is a successful defense strategy. Hence $S_j, j\in\II$ forms a set cover of \UU and thus the \SC instance is a \YES instance.
\end{proof}
}
\longversion{
We now present our \WT-hardness proof for the \OD problem for the Condorcet voting rule parameterized by $k_a$.
}
\begin{theorem}\shortversion{[$\star$]}\label{thm:hard_ta_condorcet} The \OD problem for the Condorcet voting rule is \WTH parameterized by $k_a$.
\end{theorem}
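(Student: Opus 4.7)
The plan is to give a parameter-preserving reduction from \SC{} (with solution size $k$ as the parameter, known to be \WTC{}) to \OD{} for the Condorcet voting rule, mirroring the scoring-rule construction in \Cref{thm:hard_ta} but realising the required election through pairwise margins via \Cref{thm:mcgarvey}. Given $(\UU=\{z_1,\ldots,z_n\},\SS=\{S_1,\ldots,S_t\},k)$, let $f_i = |\{j\in[t]: z_i\in S_j\}|$. I would first pad $\SS$ with empty sets so that $t - f_i - k > 3k$ for every $i\in[n]$; this does not change the \SC{} answer and later allows a clean ``safety'' argument about a balancing group.

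Next I would build an instance of \OD{} with candidate set $\CC = \{x_1,\ldots,x_n\}\cup\{y\}$ and $t+1$ voter groups. For each $j\in[t]$, I invoke \Cref{thm:mcgarvey} to realise a group $\GG_j$ whose only non-zero pairwise margins are $\DD_{\GG_j}(y,x_i)=2$ whenever $z_i\notin S_j$, with every inter-$x_i$ margin set to $0$. I add a single balancing group $\HH$, again realised via \Cref{thm:mcgarvey}, with $\DD_\HH(y,x_i) = 2k - 2(t-f_i)$. All prescribed margins are even, so the parity condition of \Cref{thm:mcgarvey} is trivially met. Summing over groups yields $\DD_\QQ(y,x_i)=2k$ in the combined profile $\QQ$, so $y$ is the Condorcet winner of $\QQ$. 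I set $k_a=k$ and $k_d=t-k$.

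The equivalence of the two instances would follow from a sharp threshold. If the attacker deletes a set $\II$ consisting only of $\GG_j$'s, then $\DD_{\mathrm{post}}(y,x_i) = 2k - 2\,|\{j\in\II : z_i\notin S_j\}|$, which is strictly positive when $|\II|<k$ or some $j\in\II$ covers $z_i$, and equals $0$ precisely when $|\II|=k$ and $\II$ fails to cover $z_i$. Hence a deletion of $k$ $\GG_j$'s breaks $y$'s Condorcet victory iff the corresponding index set is not a cover. Owing to the padding, any attack that also deletes $\HH$ (using at most $k-1$ further deletions) leaves $\DD(y,x_i) \geq 2(t-f_i) - 2(k-1) > 0$, so one may assume without loss of generality that the attacker never attacks $\HH$ and, symmetrically, that the defender never wastes budget protecting $\HH$. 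The defender therefore effectively chooses which $k$ of the $\GG_j$'s to leave exposed, and the defense succeeds iff this $k$-element exposed set is a cover of $\UU$. This is equivalent to the \SC{} instance being a \YES{} instance, and since $k_a=k$ the reduction is parameter-preserving.

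The principal technical obstacle is the $\HH$-safety argument: showing that attacks involving $\HH$ cannot help the attacker (and hence that $\HH$ need not be protected) requires the empty-set padding together with the specific choice of $\DD_\HH(y,x_i)$ that exactly cancels the bulk contribution $2(t-f_i)$ from the $\GG_j$'s. Once this is secured, the remainder is the same threshold calculation used in the proof of \Cref{thm:hard_ta} for scoring rules, and the \WTH-ness of \OD{} for the Condorcet voting rule parameterized by $k_a$ follows.
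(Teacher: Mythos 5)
Your proposal is correct and follows essentially the same route as the paper's proof: a parameter-preserving reduction from \SC{} with the same candidate set, the same groups $\GG_j$ (margin $2$ for $y$ over $x_i$ exactly when $z_i\notin S_j$), the same balancing group (your $\DD_\HH(y,x_i)=2k-2(t-f_i)$ is precisely the paper's $\DD_\HH(x_i,y)=2(t-f_i-k)$), the same budgets $k_a=k$, $k_d=t-k$, the same padding, and the same threshold argument showing a deletion of $k$ exposed groups defeats $y$ iff the exposed indices fail to cover $\UU$. The only cosmetic difference is that you invoke the McGarvey-type lemma explicitly to realise the margins, which the paper leaves implicit.
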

\longversion{
\longversion{\begin{proof}}
\shortversion{\begin{proof_sketch}}
 Let $(\UU, \SS = \{S_j: j\in[t]\}, k)$ be an arbitrary instance of \SC. Let $\UU=\{z_i:i\in[n]\}$. We assume that $k>3$ since otherwise the \SC instance is polynomial time solvable. For $i\in[n]$, let $f_i$ be the number of $j\in[t]$ such that $z_i\in S_j$; that is, $f_i = |\{j\in[t]: z_i\in S_j\}|$. We assume, without loss of generality, that for every $i\in[n]$, $t-f_i-k>3k$ by adding at most $9t$ empty sets in \SS. We construct the following instance of the \OD problem for the Condorcet voting rule. The set of candidates $\CC = \{x_i: i\in[n]\} \cup \{y\}$. We have the following voter groups.

 \begin{itemize}[topsep=0pt,itemsep=0pt]
  \item For every $j\in[t]$, we have a voter group $\GG_j$. For every $i\in[n]$ and $j\in[t]$, we have $\DD_{\GG_j}(y,x_i)=2$ if $z_i\notin S_j$ and $\DD_{\GG_j}(y,x_i)=0$ otherwise. We also have $\DD_{\GG_j}(x_i, x_\el)=0$ for every $j\in[t], i,\el\in[n]$ with $i\ne\el$.
  \item We have another voter group \HH where, for every $i\in[n]$, we have $\DD_\HH(x_i,y) = 2(t-f_i-k)$. We also have $\DD_{\HH}(x_i, x_\el)=0$ for every $i,\el\in[n]$ with $i\ne\el$.
 \end{itemize}

 We define attacker resource $k_a$ to be $k$ and the defender's resource $k_d$ to be $t-k$. This finishes the description of the \OD instance. We first observe that the candidate $y$ is a Condorcet winner of the resulting election. \shortversion{The proof of equivalence of the two instances is similar in spirit of the proof of \Cref{thm:hard_ta}.}\longversion{We now prove that the \OD instance $(\CC, \cup_{j\in[t]}\GG_j\cup\HH, k_a, k_d)$ is equivalent to the \SC instance $(\UU, \SS, k)$.

 In the forward direction, let us suppose that the \SC is a \YES instance. Let $\II\subset[t]$ be such that $|\II|=k$ and $\bigcup_{j\in\II} S_j = \UU$. We claim that the defender's strategy of defending the voter groups $\GG_j$ for every $j\in[t]\setminus\II$ results in a successful defense. To see this, we first observe that, we can assume without loss of generality that the attacker does not attack the voter group \HH since the candidate $y$ loses every pairwise election in \HH. Since, for every $i\in[n]$ and $j\in[t]$, the candidate $y$ does not lose any pairwise election in the voter group $\GG_j$, we may assume that the attacker attacks the voter group $\GG_j$ for every $j\in\II$ (since they are the only voter groups unprotected except \HH). Now, since $S_j, j\in\II$ forms a set cover of \UU, after deleting the voter groups $\GG_j, j\in\II$, we have $\DD_{\cup_{j\in[t]\setminus\II}\GG_i\cup\HH}(y,x_i)\ge 2(t-f_i-k+1)-2(t-f_i-k)=2$ for every $i\in[n]$. Hence, after deleting the voter groups $\GG_j, j\in\II$, the candidate $y$ continues to be the Condorcet winner of the remaining profile. Hence the \OD instance is a \YES instance.

 In the other direction, let us suppose that the \OD instance is a \YES instance. We assume, without loss of generality, that the defender protects exactly $t-k$ voter groups. We argued in the forward direction that we can assume, without loss of generality, that the attacker never attacks the voter group \HH. Hence, we can also assume, without loss of generality, that the defender also does not defend the voter group \HH. Let $\II\subset[t]$ be such that $|\II|=k$ and the defender defends the voter group $\GG_j$ for every $j\in[t]\setminus\II$. We claim that the sets $S_j, j\in\II$ forms a set cover of \UU. Suppose not, then let $z_i$ be an element in \UU which is not covered by $S_j, j\in\II$. We observe that $\DD_{\cup_{j\in[t]\setminus\II}\GG_i\cup\HH}(y,x_i)= 2(t-f_i-k)-2(t-f_i-k)=0$ and thus attacking the voter groups $\GG_j$ for every $j\in\II$ makes the candidate $y$ not the Condorcet winner. This contradicts our assumption that defending the voter group $\GG_j$ for every $j\in[t]\setminus\II$ is a successful defense strategy. Hence $S_j, j\in\II$ forms a set cover of \UU and thus the \SC instance is a \YES instance.}
\longversion{\end{proof}}
\shortversion{\end{proof_sketch}}
}
}

We now show that the \OA problem for the scoring rules is \WOH even parameterized by the combined parameter $k_a$ and $k_d$. Towards that, we exhibit a polynomial parameter transform from the \CL problem parameterized by the size of the clique we are looking for which is known to be \WO-complete. \longversion{The \CL problem is defined as follows.

\begin{definition}[\CL]
 Given a graph \GG and an integer $k$, does there exist a clique in \GG of size $k$? We denote an arbitrary instance of \CL by $(\GG,k)$.
\end{definition}
}

\shortversion{
\begin{theorem}\label{thm:hard_tad_scoring_con}
 The \OA problem for every scoring rule and Condorcet rule is \WOH parameterized by $(k_a,k_d)$.
\end{theorem}
}

\longversion{

\begin{theorem}\label{thm:hard_tad_scoring}
 The \OA problem for every scoring rule is \WOH parameterized by $(k_a,k_d)$.
\end{theorem}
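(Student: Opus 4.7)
The plan is to give a polynomial parameter transformation from \CL, which is \WOH parameterized by the clique size $k$, to the $r$-\OA problem for the scoring rule $r$ induced by an arbitrary normalized score vector $\overrightarrow{\alpha}$. Because both $k_a$ and $k_d$ in the constructed instance will be $\OO(k^2)$, this transfers \WO-hardness to the combined parameter $(k_a,k_d)$.

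Given an instance $(G=(V,E),k)$ of \CL, I would use the pairwise gadgets $\PP_x^y$ from \Cref{lem:scoringrule} to build an election in which the attacker's optimal strategy is to attack the $k$ vertex groups and $\binom{k}{2}$ edge groups indexed by a $k$-clique. The candidates are a designated initial winner $y$, two ``count witnesses'' $a$ and $b$, a family of ``consistency witnesses'' $z_v$ for each $v\in V$, and any dummies needed to fill the length of $\overrightarrow{\alpha}$. The voter groups are: one \emph{vertex group} $G_v$ for each $v\in V$ whose removal increases the score of $a$ by one relative to $y$ and shifts each $z_u$ (for $u$ adjacent to $v$) in a prescribed direction; one \emph{edge group} $G_e$ for each $e\in E$ whose removal increases the score of $b$ by one relative to $y$ and shifts $z_u,z_v$ for $u,v\in e$ in the opposite direction; and one large base group $\hat\HH$ that sets the initial leads of $y$ over each witness candidate. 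I would set $k_a = k+\binom{k}{2}$ and $k_d$ just smaller than $k_a$, with the precise value of $k_d$ and the leads $L_a,L_b,L_z$ tuned so that $(i)$ the clique attack leaves all $z_v$ dormant, since the contributions from the $k-1$ attacked neighbors cancel those from the $k-1$ attacked incident edges, and every $k_d$-sized defense leaves $a$ or $b$ at least tying $y$, while $(ii)$ any non-clique attack either misbalances vertex/edge counts, letting the defender concentrate on the over-represented flavor to keep both $a$ and $b$ strictly behind $y$, or contains an edge not induced by the attacked vertices, exposing a $z_v$-surplus that the defender can cancel by protecting the single offending edge group.

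The correctness argument is a case analysis on the attacker's selection. For the forward direction, a $k$-clique $C$ yields the intended attack, and the cancellation of $z$-contributions together with the calibrated leads forces every $k_d$-sized defense to fail. For the backward direction, a successful attack must use exactly $k$ vertex groups and $\binom{k}{2}$ edge groups (otherwise the defender concentrates her budget on the over-represented side) and the attacked edges must be induced on the attacked vertices (otherwise a $z_v$-exploit is available to the defender), so the attack set describes a $k$-clique. The main obstacle, as usual in \CL-to-combinatorial-attack reductions, is the joint arithmetic calibration of $k_d$, $L_a$, $L_b$, $L_z$, and the magnitudes of the $z$-perturbations inside each $G_v$ and $G_e$, so that the specific defender exploits above are available precisely when the attack is not a $k$-clique; once this balance is set the remaining bookkeeping is routine, and the reduction is clearly a polynomial parameter transformation with $k_a,k_d = \OO(k^2)$.
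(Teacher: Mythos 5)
Your proposal targets the right source problem (\CL) and the parameter bookkeeping ($k_a,k_d$ polynomial in $k$, PPT) is fine, but the core verification mechanism is aimed at the wrong quantifier and, as described, the backward direction breaks. In the \OA problem the defender moves \emph{after} the attack and can undo all but $k_a-k_d$ of it: the outcome is determined by the residual deleted set $\II\setminus\II'$, and the attack succeeds only if \emph{every} residual subset of size at least $k_a-k_d$ changes the winner. Your central calibration claims -- that in the clique attack ``the contributions from the $k-1$ attacked neighbors cancel those from the $k-1$ attacked incident edges'' and that counting via $a$ and $b$ forces a tie -- are statements about the \emph{full} attacked set of $k+\binom{k}{2}$ groups, which is precisely the one configuration the defender will never allow to stand. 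What must be analyzed is the effect of every pair (taking $k_d=k_a-2$) of attacked groups left undefended: vertex--vertex, vertex--edge, and edge--edge pairs must all dethrone $y$. With unit increments to $a$ per deleted vertex group and to $b$ per deleted edge group, this forces leads $L_a\le 2$ and $L_b\le 2$, and then a structureless attack on any $k_a$ vertex groups (or any $k_a$ edge groups) also succeeds, since whichever two the defender leaves deleted already give $a$ (resp.\ $b$) a $+2$ swing; the $z_v$ witnesses cannot repair this, because adjacency of the two surviving deletions is never what they measure (a $z_u$-threshold of $1$ lets any single deleted vertex group win, and a threshold of $2$ detects a common neighbour rather than an edge). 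So the ``routine bookkeeping'' you defer is not routine -- no choice of $L_a,L_b,L_z$ makes the sketched gadgets sound.

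The paper's reduction avoids this by making adjacency checkable from the residual pair itself: candidates are indexed by \emph{edges}, each vertex group $\GG_i$ contains two copies of $\PP_{x_j}^{d}$ for every edge $e_j$ incident on $v_i$ (so deleting $\GG_i$ swings $x_j$ by $+2$ relative to $y$), the initial lead of $y$ over every $x_j$ is $3$, and one sets $k_a=k$, $k_d=k-2$ with only vertex groups ever worth attacking. Then whatever two attacked groups survive the defense, $y$ loses iff those two vertices are adjacent (a $+4$ swing on the corresponding edge candidate), attacks of size less than $k$ are foiled because a single surviving deletion swings any $x_j$ by at most $2$, and a non-clique attack is foiled by the defender leaving a non-adjacent pair deleted. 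If you want to salvage your plan, you would have to replace the $a/b/z$ counting-and-cancellation gadget by a mechanism of this kind, where the ``check'' is performed on every small residual subset rather than on the aggregate attack.
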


\longversion{\begin{proof}
 Let $(\GG=(\VV, \EE),k)$ be an arbitrary instance of the \CL problem. Let $\VV=\{v_i:i\in[n]\}$ and $\EE=\{e_j:j\in[m]\}$. Let $\overrightarrow{\alpha}$ be any arbitrary normalized score vector of length $m+2$. We construct the following instance of the \OA problem for the scoring rule induced by the score vector $\overrightarrow{\alpha}$. The set of candidates $\CC = \{x_j: j\in[m]\} \cup \{y,d\}$. We have the following voter groups.

 \begin{itemize}[topsep=0pt,itemsep=0pt]
  \item For every $i\in[n]$, we have a voter group $\GG_i$. For every $i\in[n]$, we have $10m$ copies of $\PP_d^{x}$ for every $x\in\CC\setminus\{d\}$ in $\GG_i$. We also have two copies of $\PP^d_{x_j}$ in the voter group $\GG_i$ if the edge $e_j$ is incident on the vertex $v_i$, for every $i\in[m]$ and $j\in[m]$.
  \item We have another voter group \HH. We have one copy of $\PP_d^{x_j}$ for every $j\in[m]$ in \HH.
 \end{itemize}

 We define attacker resource $k_a$ to be $k$ and the defender's resource $k_d$ to be $k-2$. This finishes the description of the \OA instance. Let \QQ be the resulting profile; that it $\QQ=\cup_{i\in[n]}\GG_i\cup\HH$. We first observe that the candidate $y$ is the winner of the resulting election since $s_\QQ(y) = s_\QQ(x_j)+3$ and $s_\QQ(y)>s_\QQ(d)$. This completes a description of the construction. Due to lack of space, we defer the proof of equivalence to a longer version of this manuscript. We now prove that the \OA instance $(\CC, \QQ, k_a, k_d)$ is equivalent to the \CL instance $(\GG, k)$.

In the forward direction, let us assume that $\UU=\{v_i:i\in\II\}\subset\VV$ with $|\II|=k$ forms a clique in \GG. We claim that attacking all the voter groups $\GG_i, i\in\II$ forms a successful attack. Indeed, suppose the defender defends all the voter groups $\GG_i, i\in\II$ except $\GG_\el$ and $\GG_{\el^\pr}$. Let $e_{j^\star}$ be the edge between the vertices $v_\el$ and $v_{\el^\pr}$ in \GG. Let the profile after the attack be $\hat{\GG}$; that is, $\hat{\GG}=\cup_{i\in[n]\setminus\II}\GG_i\cup\GG_{\el}\cup\GG_{\el^\pr}\cup\HH$. Then we have $s_{\hat{\GG}}(y) = s_{\hat{\GG}}(x_{j^\star})-1$ and thus the candidate $y$ does not win after the attack. Hence the \OA instance is \YES instance.

 In the other direction, let the \OA instance be a \YES instance. We first observe that the candidate $d$ performs worse than everyone else in every voter group and thus $d$ can never win. Now we can assume, without loss of generality, that the attacker does not attack the voter group \HH since the candidate $y$ is not receiving more score than any other candidate except $d$ in \HH. Let attacking all the voter groups $\GG_i, i\in\II$ with $|\II|\le k$ is a successful attack. We observe that if $|\II|<k$, then defending any $k-2$ of the groups that are attacked foils the attack -- since the candidate $y$ continues to win even after deleting any one group. Hence we have $|\II|=k$. Let us consider the subset of vertices $\UU=\{v_i:i\in\II\}$. We claim that \UU forms a clique in \GG. Indeed, if not, then let us assume that there exists two indices $\el, \el^\pr\in\II$ such that there is no edge between the vertices $v_\el$ and $v_{\el^\pr}$ in \GG. Let us consider the defender strategy of defending all the voter groups $\GG_i, i\in\II\setminus\{\el, \el^\pr\}$. We observe that the candidate $y$ continues to uniquely receive the highest score among all the candidates and thus $y$ wins uniquely in the resulting election. This contradicts our assumption that attacking all the voter groups $\GG_i, i\in\II$ with $|\II|\le k$ is a successful attack. Hence \UU forms a clique in \GG and thus the \CL instance is a \YES instance.
\end{proof}}
We now show similar result as of \Cref{thm:hard_tad_scoring} for the Condorcet voting rule.

\begin{theorem}\label{thm:hard_tad_condorcet}\shortversion{[$\star$]}
 The \OA problem for the Condorcet voting rule is \WOH parameterized by $(k_a,k_d)$.
\end{theorem}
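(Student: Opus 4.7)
The plan is to reduce from \CL{} parameterized by clique size $k$, adapting the scoring-rule construction of \Cref{thm:hard_tad_scoring} by specifying pairwise margins directly via McGarvey's theorem (\Cref{thm:mcgarvey}) in place of the scoring profiles $\PP_x^y$. Given $(G=(V,E),k)$ with $V = \{v_1, \ldots, v_n\}$ and $E = \{e_1, \ldots, e_m\}$, I would introduce a candidate $y$ together with one candidate $x_j$ per edge $e_j$, and for each vertex $v_i$ create a voter group $\GG_i$ realising the margin pattern $\DD_{\GG_i}(y, x_j) = 2$ whenever $v_i$ is an endpoint of $e_j$, and $0$ for all remaining pairs. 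All prescribed margins are even, so \Cref{thm:mcgarvey} realises each $\GG_i$ using polynomially many voters. I would set $k_a = k$ and $k_d = k-2$, assuming $k \geq 3$ without loss of generality.

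In the unattacked profile, each $x_j$ satisfies $\DD(y, x_j) = 4$, with both contributions coming from the two endpoints of $e_j$, so $y$ is the unique Condorcet winner. The central invariant driving both directions of the equivalence is: deleting a voter group $\GG_i$ lowers $\DD(y, x_j)$ by exactly $2$ for every $e_j$ incident on $v_i$ and leaves every other margin untouched. Consequently, deleting two groups $\GG_a, \GG_b$ destroys $y$'s Condorcet-winner status if and only if $(v_a, v_b) \in E$, since in that case $\DD(y, x_{(v_a,v_b)})$ drops from $4$ to $0$.

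For the forward direction, if $\{v_i : i \in \II\}$ is a $k$-clique, the attacker plays $\II$; for any defender response at least two attacked groups remain deleted (because $|\II| - k_d = 2$), and since the two deleted vertices both belong to the clique they are the endpoints of some edge $e_{j^*}$, driving $\DD(y, x_{j^*})$ to $0$ and unseating $y$. For the reverse direction, given a successful attack $\II$, I would first argue $|\II| = k$ (any smaller $|\II|$ lets the defender save all but at most one group, and deleting a single $\GG_i$ only drops each margin from $4$ to $2$). Then, for every pair $a,b \in \II$, the specific defender response ``defend the other $k-2$ groups in $\II$'' deletes exactly $\{\GG_a, \GG_b\}$ and must change the outcome, which by the invariant forces $(v_a, v_b) \in E$, so $\{v_i : i \in \II\}$ is a clique.

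The main subtlety I foresee is ensuring that no extraneous attack strategies succeed --- in particular, ruling out attacks of size less than $k$ and confirming that no spurious gadget interactions arise. Unlike the scoring-rule proof, no auxiliary helper group analogous to $\HH$ is required, because \Cref{thm:mcgarvey} provides direct control over all relevant pairwise margins; the vertex groups alone suffice to make $y$ the initial Condorcet winner with a margin of exactly $4$ that breaks precisely when the two endpoints of some edge are deleted.
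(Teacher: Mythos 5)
Your reduction is correct and follows the same skeleton as the paper's proof: a parameterized reduction from \CL with one candidate $x_j$ per edge plus $y$, one voter group per vertex contributing a margin of $2$ to $\DD(y,x_j)$ for each incident edge, budgets $k_a=k$ and $k_d=k-2$, and the same two-step reverse argument (first force $|\II|=k$ because any smaller attack leaves at most one group deleted, then extract adjacency of every pair via the defense that protects the other $k-2$ attacked groups). The one genuine difference is that the paper retains a helper group \HH with $\DD_\HH(x_j,y)=2$ for every $j$ (mirroring the scoring-rule construction of \Cref{thm:hard_tad_scoring}), so that in its construction $y$'s initial margin over each $x_j$ is $2$ rather than your $4$. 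Your observation that no such helper is needed is sound, and your bookkeeping is in fact the tighter one: with the paper's \HH, deleting a single vertex group already drives $\DD(y,x_j)$ to $0$ for its incident edges, which sits uneasily with the step asserting that attacks of size less than $k$ are foiled because $y$ survives any single deletion, whereas in your \HH-free version the margins are $4$, $2$, and $0$ after zero, one, and two adjacent deletions respectively, which is exactly what that step requires. Your appeal to \Cref{thm:mcgarvey} to realize the all-even margin pattern with polynomially many voters is also the standard device the paper relies on. In short: same approach, with the helper group stripped away, and the margin arithmetic comes out cleaner for it.
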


\longversion{
\begin{proof}

 Let $(\GG=(\VV, \EE),k)$ be an arbitrary instance of the \CL problem. Let $\VV=\{v_i:i\in[n]\}$ and $\EE=\{e_j:j\in[m]\}$. We construct the following instance of the \OA problem for the Condorcet voting rule. The set of candidates $\CC = \{x_j: j\in[m]\} \cup \{y\}$. We have the following voter groups.

 \begin{itemize}[topsep=0pt,itemsep=0pt]
  \item For every $i\in[n]$, we have a voter group $\GG_i$. We have $\DD_{\GG_i}(y,x_j)=2$ if the edge $e_j$ is incident on the vertex $v_i$ and $\DD_{\GG_i}(y,x_j)=0$ if the edge $e_j$ is not incident on the vertex $v_i$, for every $i\in[n]$ and $j\in[m]$. We also have $\DD_{\GG_i}(x_\el, x_j)=0$ for every $i\in[n]$, $j, \el\in[m]$, and $j\ne \el$.
  \item We have another voter group \HH where we have $\DD_\HH(x_j,y)=2$ for every $j\in[m]$ and $\DD_{\HH}(x_\el, x_j)=0$ for every $j, \el\in[m]$ and $j\ne \el$.
 \end{itemize}

 We define attacker resource $k_a$ to be $k$ and the defender's resource $k_d$ to be $k-2$. This finishes the description of the \OA instance. Let \QQ be the resulting profile; that it $\QQ=\cup_{i\in[n]}\GG_i\cup\HH$. We first observe that the candidate $y$ is the Condorcet winner of the resulting election. \shortversion{The proof of equivalence of the two instances is similar in spirit of the proof of \Cref{thm:hard_tad_scoring}.}We now prove that the \OA instance $(\CC, \QQ, k_a, k_d)$ is equivalent to the \CL instance $(\GG, k)$.

 In the forward direction, let us assume that $\UU=\{v_i:i\in\II\}\subset\VV$ with $|\II|=k$ forms a clique in \GG. We claim that attacking all the voter groups $\GG_i, i\in\II$ forms a successful attack. Indeed, suppose the defender defends all the voter groups $\GG_i, i\in\II$ except $\GG_\el$ and $\GG_{\el^\pr}$. Let $e_{j^\star}$ be the edge between the vertices $v_\el$ and $v_{\el^\pr}$ in \GG. Let the profile after the attack be $\hat{\GG}$; that is, $\hat{\GG}=\cup_{i\in[n]\setminus\II}\GG_i\cup\GG_{\el}\cup\GG_{\el^\pr}\cup\HH$. Then we have $\DD_{\hat{\GG}}(y,x_{j^\star}) = 0$ and thus the candidate $y$ is not the unique winner after the attack. Hence the \OA instance is \YES instance.

 In the other direction, let the \OA instance be a \YES instance. We can assume, without loss of generality, that the attacker does not attack the voter group \HH since the candidate $y$ loses every pairwise election in \HH. Let attacking all the voter groups $\GG_i, i\in\II$ with $|\II|\le k$ is a successful attack. We observe that if $|\II|<k$, then defending any $k-2$ of the groups that are attacked foils the attack -- since the candidate $y$ continues to be the Condorcet winner even after deleting any one group. Hence we have $|\II|=k$. Let us consider the subset of vertices $\UU=\{v_i:i\in\II\}$. We claim that \UU forms a clique in \GG. Indeed, if not, then let us assume that there exists two indices $\el, \el^\pr\in\II$ such that there is no edge between the vertices $v_\el$ and $v_{\el^\pr}$ in \GG. Let us consider the defender strategy of defending all the voter groups $\GG_i, i\in\II\setminus\{\el, \el^\pr\}$. We observe that the candidate $y$ continues to be the Condorcet winner in the resulting election. This contradicts our assumption that attacking all the voter groups $\GG_i, i\in\II$ with $|\II|\le k$ is a successful attack. Hence \UU forms a clique in \GG and thus the \CL instance is a \YES instance.
\end{proof}
}
}

Once we have a parameterized algorithm for the \OD problem for the parameter $(k_a,k_d)$, an immediate question is whether there exists a kernel for the \OD problem of size polynomial in $(k_a,k_d)$. We know that the \HS problem does not admit polynomial kernel parameterized by the universe size~\cite{downey1999parameterized}.\longversion{ We observe that the reductions from the \HS problem to the \OD problem in \Cref{thm:hard_td,thm:hard_td_condorcet_od_oa} are polynomial parameter transformations.}\shortversion{ It turns out that the reductions from the \HS problem to the \OD and \OA problems in \Cref{thm:hard_td_od_oa,thm:hard_td_condorcet_od_oa,thm:hard_tad_scoring_con} are polynomial parameter transformations.} Hence we immediately have the following corollary.

\begin{corollary}\label{cor:no_poly_kernel}
 The \OD and \OA problems for the scoring rules and the Condorcet rule do not admit a polynomial kernel parameterized by $(k_a,k_d)$.
\end{corollary}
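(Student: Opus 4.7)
The plan is to invoke \Cref{thm:ppt-reduction}, which lifts kernel lower bounds across polynomial parameter transformations (PPTs). I would take as the source problem \HS parameterized by the universe size $|\UU|$, which is known not to admit a polynomial kernel unless $\caveat$. The work then reduces to checking that the four \WT-hardness reductions already in hand --- \Cref{thm:hard_td,thm:hard_td_condorcet,thm:hard_tad_scoring,thm:hard_tad_condorcet}, together with the \OA consequences derived from them --- become PPTs once the source parameter of \HS is relabelled from the solution size $k$ to the universe size $|\UU| = n$.

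This amounts to a parameter audit of the existing constructions. In the scoring-rule reduction of \Cref{thm:hard_td} the output election is set with $k_d = k+1$ and $k_a = n$; since $k \le n$, the combined parameter $(k_a, k_d)$ is bounded by $O(n)$, which is polynomial in the source parameter $|\UU|$. The Condorcet-rule reduction of \Cref{thm:hard_td_condorcet} is structurally identical with $k_d = k$ and $k_a = n$, giving the same polynomial dependence. Both reductions run in polynomial time by inspection, and their classical correctness is already established in the respective theorems. Hence each of them is, without modification, a PPT from \HS parameterized by $|\UU|$ to \OD parameterized by $(k_a, k_d)$, and \Cref{thm:ppt-reduction} yields the \OD half of the corollary.

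For the \OA variants, the very same constructions apply, but with the roles of \YES and \NO instances swapped: the reduced election is a \NO instance of \OA if and only if the source \HS instance is a \YES instance. This complementation preserves all parameter sizes and running times, so the PPT property carries over to the complement of \OA, and the kernel lower bound is inherited because polynomial kernelizability is invariant under complementation of the decision question for a fixed parameter. The only conceptual obstacle I anticipate is precisely this complementation step for \OA: one must argue that a hypothetical polynomial kernel for \OA, composed with the \YES/\NO-swapping reduction, would yield a polynomial-size equivalent for \HS parameterized by $|\UU|$, contradicting $\caveat$. Beyond this bookkeeping, no further argument is required: the corollary follows as a direct consequence of \Cref{thm:ppt-reduction} and the already-established reductions.
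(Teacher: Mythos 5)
Your proposal is correct and follows essentially the same route as the paper: it reuses the \HS-based reductions of \Cref{thm:hard_td,thm:hard_td_condorcet} (and their \OA counterparts), observes that with \HS parameterized by the universe size they are polynomial parameter transformations with $k_a=n$ and $k_d\le k+1$, and invokes \Cref{thm:ppt-reduction} together with the known kernel lower bound for \HS under that parameterization. Your explicit treatment of the \YES/\NO swap for \OA is a point the paper leaves implicit, but it does not change the argument.
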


\section{The FPT Algorithm}

%
%

We complement the negative results of \Cref{obs:hard_td} and \longversion{\Cref{thm:hard_ta}}\shortversion{\Cref{thm:hard_ta_sc_con}} by presenting an FPT algorithm for the \OD problem parameterized by $(k_a, k_d)$. In the absence of a defender, that is when $k_d=0$,\longversion{ Yin et al.}~\cite{YinVAH16} showed that the \OD problem is polynomial time solvable for the plurality voting rule. Their polynomial time algorithm for the \OD problem can easily be extended to any scoring rule. Using this polynomial time algorithm, we design the following $\OO^*(k_a^{k_d})$ time algorithm for the \OD problem for scoring rules. \longversion{This result shows that the \OD problem is fixed parameter tractable with $(k_a,k_d)$ as the parameter.}

\begin{theorem}\label{thm:fpt_ka_m}\shortversion{[$\star$]}
 There is an algorithm for the \OD problem for every scoring rule and the Condorcet voting rule which runs in time $\OO^*(k_a^{k_d})$.
\end{theorem}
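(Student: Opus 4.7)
The plan is to design a bounded-depth branching algorithm of depth at most $k_d$ and branching factor at most $k_a$, using a polynomial-time attack oracle at the leaves. The base case $k_d = 0$ reduces to deciding whether the attacker can change the winning set by deleting some subset of voter groups of size at most $k_a$. For every scoring rule this is resolved by a sort-and-pick computation: for each ordered pair $(w, c)$ with $w$ in the current winning set and $c$ outside it, determine whether removing the $k_a$ voter groups $\GG$ maximizing $s_\GG(w) - s_\GG(c)$ brings $s(c)$ to at least $s(w)$; the symmetric check handles demoting an original winner. This is exactly the extension of the plurality algorithm of \cite{YinVAH16} to arbitrary scoring rules alluded to in the paper. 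For the Condorcet voting rule the analogous per-pair computation on pairwise margins $\DD_\GG(\cdot,\cdot)$ settles the case when there is a current Condorcet winner (the attacker wins iff some $c$ can be pushed to tie or beat $w$, which is a single sort on the values $\DD_\GG(w,c)$), and the no-Condorcet-winner case is handled similarly. In all cases, the oracle can be made to return either a concrete witness attack set $A \subseteq [n]$ with $|A| \le k_a$, or a certificate that no successful attack exists.

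In the recursive step we invoke this oracle on the current, partially-defended instance, treating already-defended groups as unattackable. If the oracle reports no successful attack, the current partial defense is already valid for this subproblem and we return \YES. Otherwise, let $A$ be the witness attack set returned. The key observation is that any valid defense $\II$ with $|\II| \le k_d$ must satisfy $\II \cap A \ne \emptyset$: were this not the case, the attacker could play exactly $A$ and succeed. We therefore branch on the choice of $j \in A$ to include in the defense, mark $\GG_j$ as unattackable, decrement the defender budget to $k_d - 1$, and recurse. The branch at the current node has at most $|A| \le k_a$ children.

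Correctness follows by induction on $k_d$: the root instance is a \YES-instance precisely when some branch returns \YES, because the hitting argument above shows the branching is both sound (we only recurse on choices that are forced) and complete (every valid defense includes some $j \in A$ and therefore lies in some branch). The running time satisfies the recurrence $T(k_d) \le k_a \cdot T(k_d - 1) + \OO^*(1)$, which solves to $T(k_d) = \OO^*(k_a^{k_d})$, as desired. The main technical care is twofold: first, ensuring the attack oracle not merely decides but also outputs a concrete witness $A$, which is immediate from the sort-and-pick and pairwise-margin constructions; second, handling the co-winner versus unique-winner regimes uniformly, so that the oracle tests both the promotion of every candidate outside the current winning set and the demotion of every candidate inside it. Both checks fit into the same polynomial-time framework, after which the branching argument runs identically for scoring rules and the Condorcet rule.
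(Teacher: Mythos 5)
Your proposal is correct and follows essentially the same route as the paper: a depth-$k_d$ search tree that repeatedly calls a polynomial-time attack oracle (the extension of the plurality attack algorithm of Yin et al.\ to arbitrary scoring rules and to Condorcet via pairwise margins) and branches on which of the at most $k_a$ attacked groups to protect, giving $\OO^*(k_a^{k_d})$ time. Your write-up is in fact somewhat more explicit than the paper's, spelling out the witness-returning oracle and the hitting argument that every valid defense must intersect the returned attack set.
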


\longversion{
\begin{proof}
 Let us prove the result for any scoring rule. The proof for the Condorcet voting rule is exactly similar. Initially we run the attacking algorithm over the $n$ voter groups without any group being protected. If a successful attack exists, the algorithm outputs the $k_a$ groups to be deleted. We recursively branch on $k_a$ cases by protecting one of these $k_a$ groups in each branch and running the attacking algorithm again. In addition, the parameter $k_d$ is also reduced by 1 each time a group is protected. When $k_d$=0, the attacking algorithm is run on all the leaves of the tree and a valid protection strategy exists as long as for at least one of the leaves the attack outputs no i.e. after deploying resources to protect $k_d$ groups the attacker is unable to change the outcome of the election with any strategy. The groups to be protected is determined by traversing the tree that leads to the particular leaf which did not output an attack. Clearly the number of nodes in this tree is bounded by $k_a^{k_d}$. The amount of time taken to find an attack at each node is bounded by $poly(n)$. Hence the running time of this algorithm is bounded by $k_a^{k_d}.poly(n)$.
\end{proof}
}

\section{Experiments}

\begin{figure}[h!]
    \centering
    \includegraphics[scale=0.4]{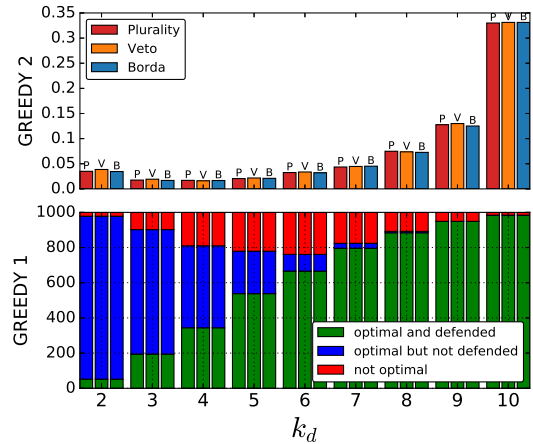}
    \caption{Performances of \gone\ and \gtwo\ for uniform voting profile generation model.}
    \label{fig:Plot1}
\end{figure}

Though the previous sections show that the optimal defending problem is computationally intractable, it is a worst-case result. In practice, elections have voting profiles that are generated from some (possibly known) distribution. In this section, we conduct an empirical study to understand how simple defending strategies perform for two such statistical voter generation models. The defending strategies we consider are variants of a simple greedy policy.

\smallskip \noindent
{\em Defending strategy}: For a given voting profile and a voting rule, the defending strategy finds the winner. Suppose the winner is $a$. The strategy considers $a$ with every other candidate, and for each such pair it creates a sorted list of classes based on the winning margin of votes for $a$ in those classes, and picks the top $k_d$ classes to form a sub-list. 
Now, among all these $(m-1)$ sorted sub-lists, the strategy picks the most frequent $k_d$ classes to protect. We call this version of the strategy \gone.
For certain profiles an optimal attacker (a) may change the outcome by attacking some of the unprotected classes or (b) is unable to change the outcome. If (a) occurs, then there is a possibility that for the value of $k_d$ there does not exist any defense strategy which can guard the election from all possible strategies of the attacker. In that case, \gone\ is optimal and is not optimal otherwise. It is always optimal for case (b). Note that, given a profile and $k_d$ protected classes, it is easy to find if there exists an optimal attack strategy, while it is not so easy to identify whether there does not exist any defending strategy if the \gone\ fails to defend. We find the latter with a brute-force search for this experiment.
A small variant of \gone\ is the following: when \gone\ is unable to defend (which is possible to find out in poly-time), the strategy chooses to protect $k_d$ classes uniformly at random. Call this strategy \gtwo.

\begin{figure}[h!]
    \centering
    \includegraphics[scale=0.4]{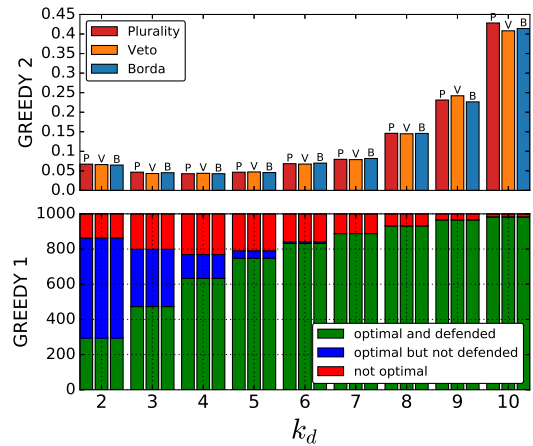}
    \caption{Performances of \gone\ and \gtwo\ for voting profile generation model with two major contesting candidates.}
    \label{fig:Plot2}
\end{figure}

\smallskip \noindent
{\em Voting profile generation}: Fix $m = 5$. 
We generate $1000$ preference profiles over these alternatives for $n = 12000$, where each vote is picked uniformly at random from the set of all possible strict preference orders over $m$ alternatives. The voters are partitioned into $12$ classes containing equal number of voters. We consider three voting rules: plurality, veto, and Borda. The lower plot in \Cref{fig:Plot1} shows the number of profiles which belongs to the three categories: (i) \gone\ defends (is optimal), (ii) \gone\ cannot defend but no defending strategy exists (is optimal), (iii) \gone\ cannot defend but defending strategy exists (not optimal). The x-axis shows different values of $k_d$ and we fix $k_a = 12 - k_d$.

The upper plot of \Cref{fig:Plot1} shows the fraction of the profiles successfully defended by \gtwo\ where \gone\ is not optimal (i.e., cannot defend but defending strategy exists) when \gtwo\ uniformly at random picks $k_d$ classes $100$ times. These fractions therefore serves as an empirical probability of successful defense of \gtwo\ given \gone\ is not optimal.

In an election where the primary contest happens between two major candidates, even though there are more candidates present, the generation model may be a little different. We also consider another generation model that generates $40\%$ profiles having a fixed alternative $a$ on top and the strict order of the $(m-1)$ alternatives is picked uniformly at random, a similar $40\%$ profiles with some other alternative $b$ on top, and the remaining $20\%$ preferences are picked uniformly at random from the set of all possible strict preference orders. Similar experiments are run on this generation model and results are shown in \Cref{fig:Plot2}.

The results show that even though optimal defense is a hard problem, a simple strategy like greedy achieves more than $70\%$ optimality. From the rest $30\%$ non-optimal cases, the variant \gtwo\ is capable of salvaging it into optimal with probability almost $5\%$ for uniform generation model and above $5\%$ for two-major contestant generation model for $k_d = k_a = 6$. This empirically hints at a possibility that defending real elections may not be too difficult.

\section{Conclusion}

We have considered the \OD{} problem from a primarily parameterized perspective for scoring rules and the Condorcet voting rule. We showed hardness in the number of candidates, the number of resources for the defender or the attacker. On the other hand, we show tractability for the combined parameter $(k_a, k_d)$. We also introduced the \OA{} problem, which is hard even for the combined parameter $(k_a, k_d)$, and also showed the hardness for a constant number of candidates. 
Even though the \OD{} problem is hard, empirically we show that relatively simple mechanisms ensure good defending performance for reasonable voting profiles.


\bibliographystyle{plain}
\bibliography{references}

\end{document}